\newtheorem{theorem}{Theorem}[section] 
\newtheorem{corollary}[theorem] {Corollary}
\newtheorem{lemma}[theorem]{Lemma}
\newtheorem{remark}{Remark}[theorem]
\begin{document}

\title{Coupling Asymmetry Optimizes Collective Dynamics over Multiplex Networks}

\author{Zhao~Song~and~Dane~Taylor
\IEEEcompsocitemizethanks{\IEEEcompsocthanksitem Z. Song is with the Department of Mathematics, Dartmouth College, Hanover, NH, 03755, USA and the Department of Mathematics, University at Buffalo, State University of New York, Buffalo, NY, 14260, USA. \protect\\
E-mail: zhao.song@dartmouth.edu
\IEEEcompsocthanksitem D. Taylor is with the School of Computing and the Department of Mathematics and Statistics, University of Wyoming, Laramie, WY, 82071, USA and the Department of Mathematics, University at Buffalo, State University of New York, Buffalo, NY, 14260, USA.\protect\\
E-mail: dane.taylor@uwyo.edu
}
\thanks{
This work was supported in part by the Simons Foundation (award \#578333) and the U.S. National Science Foundation (DMS-2052720).
}}

\markboth{Journal of \LaTeX\ Class Files,~Vol.~14, No.~8, August~2015}%
{Shell \MakeLowercase{\textit{et al.}}: Bare Demo of IEEEtran.cls for Computer Society Journals}

\IEEEtitleabstractindextext{
\begin{abstract} 
Networks are often interconnected, with one system wielding greater influence over another. However, the effects of such asymmetry on
self-organized phenomena (e.g., consensus and synchronization) are not well understood. Here, we study collective dynamics using a 
generalized graph Laplacian for multiplex networks containing layers that are asymmetrically coupled. We explore the nonlinear effects of coupling asymmetry on the convergence rate toward a  collective state, finding that asymmetry 
induces one or more optima that maximally accelerate convergence. When a faster and a slower system are coupled, depending on their relative timescales,  their optimal coupling is either \emph{cooperative}  (network layers mutually depend on one another) or \emph{non-cooperative} (one network directs another without a reciprocated influence). It is often optimal for the faster system to more-strongly influence the slower one, yet counter-intuitively,  the opposite can also be true.   As an application, we model collective 
decision-making for a human-AI system in which a social network is supported by  an AI-agent network, finding that a cooperative optimum requires that these two networks operate on a sufficiently similar timescale. More broadly, our work highlights the \emph{optimization of coupling asymmetry and timescale balancing} as fundamental concepts for the design of collective behavior over interconnected systems.  
\end{abstract}

\begin{IEEEkeywords} 
Multiplex Networks, Asymmetric Coupling, Interconnected Systems, SupraLaplacian, Consensus Dynamics.
\end{IEEEkeywords}}

\maketitle

\IEEEdisplaynontitleabstractindextext
\IEEEpeerreviewmaketitle
\IEEEraisesectionheading{\section{Introduction}\label{sec:introduction}}

\IEEEPARstart{C}{ollective} dynamics are widespread in nature and technology \cite{ermentrout2004review}  with applications ranging from synchronized oscillations in brains \cite{bansal2019cognitive} and power grids \cite{nishikawa2015comparative} to consensus processes in social networks \cite{hinsz1990cognitive,
fiol1994consensus,flood2000chief,
mohammed2001toward}, animal groups \cite{conradt2005consensus,westley2018collective}, and decentralized algorithms for machine learning and AI \cite{bijral2017data,assran2019stochastic,niwa2020edge,vogels2020powergossip,kong2021consensus,huynh2021}. 
The formulation of many such models involves a graph  Laplacian matrix $\mathbf{L}$ whose entries encode a network,  including models for  the synchronization of networks of identical dynamical systems \cite{pecora1998master,sun2009master}  and heterogeneous phase oscillators \cite{skardal2014optimal,taylor2016synchronization}, consensus dynamics \cite{kibangou2012graph,huynh2021},    Markov chains \cite{delvenne2010stability,boyd2004fastest},  diffusion \cite{masuda2017,coifman2006diffusion}, and DC electricity flow \cite{doyle1984random}. The analyses of such systems often utilize spectral theory, and in particular, the second-smallest eigenvalue $\lambda_2$ of $\mathbf{L}$ can help determine dynamical properties such as convergence rate \cite{kibangou2012graph,huynh2021} and local stability  \cite{pecora1998master,sun2009master}. 

One should note, however, that systems rarely exist in total isolation, and it is important to understand the dynamics of interconnected (i.e., multilayer) networks \cite{kivela2014,boccaletti2014}. A popular modeling framework is multiplex networks \cite{cozzo2018multiplex}, consisting of \emph{network layers}, each involving the same set of nodes but possibly different edges (called \emph{intralayer edges}). In recent years, there has been growing interest in extending Laplacian-based  models  to the setting of multiplex networks, including work on random walks   \cite{mucha2010,sole2016random,ding2018centrality,
taylor2020multiplex}, 
 synchronization \cite{aguirre2014synchronization,gambuzza2015intra,jalan2016cluster,sawicki2018delay,liu2020intralayer},
 and diffusion \cite{gomez2013diffusion,sole2013spectral,deford2018new}.
Of particular importance are  \emph{supraLaplacian matrices} \cite{gomez2013diffusion,sole2013spectral} that generalize graph Laplacians to  multiplex networks, thereby extending the general field of Laplacian-based dynamics to this setting. 
Importantly, existing research on supraLaplacian matrices and related applications has focused on multiplex networks in which the layers are symmetrically coupled using undirected \emph{interlayer edges}. This is problematic since the effects of asymmetry  are known to  play a crucial role in shaping  self-organization for network-coupled dynamical systems \cite{bragard2003asymmetric,timme2006does,restrepo2014mean,skardal2015erosion,nishikawa2016symmetric,asllani2018structure,molnar2021asymmetry}, and they provide opportunities for system optimization \cite{taylor2020introduction}. Moreover, it is natural to assume that the relationship between networks is asymmetric for many contexts. 

Here, we propose and analyze a model for interconnected consensus  systems, which can represent, for example, 
collective decision-making over a social network that  is supported by AI agents, which provide decision support and themselves interact and cooperatively learn. (See Sec.~\ref{sec:AI} for further description.) Human-AI  systems are gaining popularity for decision-making in military \cite{rasch2003incorporating,zhou2019bayesian,azar2021drone} 
and financial contexts \cite{albadvi2007decision,yu2005designing,chou1997stock}, 
yet existing theory  for interconnected decision systems is insufficient (even, as we shall show, for a simple linear model). 
For this application, a dystopian-minded engineer would naturally design the social network   to wield greater influence over the network  of AI agents, \emph{but how can this be achieved? And how might such a  system be optimized?} Similar questions arise for any collective dynamics over asymmetrically coupled networks, as well as the following question: \emph{When networks are optimally coupled, is their coupling configuration cooperative or non-cooperative?} That is, do optimally coupled networks mutually influence each other, or does one network direct others without  feedback.

In this work, we approach these questions by considering networks that are optimally coupled to maximize the convergence rate toward a collective state. We propose a generalized {supraLaplacian matrix} \cite{tejedor2018diffusion,sole2013spectral,wang2021unique,gomez2013diffusion,taylor2020multiplex}  
$\mathbb{L}(\omega,\delta)$, where $\omega\ge0$ is a \emph{coupling strength} that controls how strongly network layers influence each other and 
$\delta\in[-1,1]$ is an \emph{asymmetry parameter} that tunes the extent to which interlayer edges are biased in a particular direction.
Motivated by human-AI decision systems, we use $\mathbb{L}(\omega,\delta)$ to formulate a continuous-time linear model for interconnected consensus systems. 
We find that coupling asymmetry can significantly bias the limiting consensus state and possibly speed or slow the  rate $\text{Re}(\lambda_2)$ of convergence (which depends on  the real part of $\lambda_2$, since $\mathbb{L}(\omega,\delta)$ can be  an asymmetric matrix).

To gain analytical insight,  we develop spectral approximation theory for  the large $\omega$ (i.e., strong coupling) limit to identify and characterize different effects on $\text{Re}(\lambda_2)$ due to varying  $\delta$. This reveals several surprising and unintuitive insights. For example, depending on the network layers' structures, increasing the  magnitude (i.e., $|\delta|$)  of  coupling asymmetry can  monotonically slow  convergence,  monotonically speed convergence, or have a more complicated effect on $\text{Re}(\lambda_2)$. For some systems, the direction (i.e., $\text{sign}(\delta)$) of asymmetry is extremely important, whereas it doesn't matter for others. We provide an initial identification and taxonomy for such nonlinear behaviors.

Because technological and natural  systems are often highly optimized  due to engineering and the process of natural selection, we present theory and experiments to study network layers that are coupled with an optimal level of asymmetry,  $\hat{\delta}=\text{argmax}_\delta \text{Re}(\lambda_2)$, that maximally accelerates convergence toward a collective state.
Focusing on the case of two layers, we characterize these configurations as being either \emph{cooperative}, in which case $|\hat{\delta}|<1$ so that $\delta$ lies within the open set $(-1,1)$ and the   layers  mutually   influence each other;   or \emph{non-cooperative}, in which case    $|\hat{\delta}|=1$ so that $\hat{\delta}\in\{-1,1\}$ lies on the boundary. In the latter case,  the optimal asymmetry involves one network   fully directing the      other   without feedback. Notably, the existence of a cooperative optimum  guarantees that the convergence rate of the multiplexed systems is faster than that for either system.

We find that the nonlinear effects of $\delta$ on $\text{Re}(\lambda_2)$ and   optima $\hat{\delta}$   depend sensitively on the layers' distinct  topological structures as well as their separate  time scales for consensus. 
Therefore, we introduce and   study a \emph{rate-scaling} parameter $\chi$ 
that allows us to tune  whether consensus is faster within layer 1 ($\chi\approx1$) or  layer 2  ($\chi\approx0$).
By considering the range $\delta\in[-1,1]$ for fixed $\chi \in(0,1)$, we obtain a  criterion (see Sec.~\ref{sec:exist}) that can guarantee the existence of a cooperative optimum, which requires that the layers' dynamics have sufficiently similar time scales (i.e., $\chi $ is neither too large or small). Finally, we also consider   situations where both $\delta$ and $\chi$ can be   freely varied  and  jointly optimized. We also  identify scenarios of cooperative and non-cooperative optima for this more complicated setting, showing that it can be beneficial to design one  layer to be as fast as possible and then have that layer non-cooperatively influence other layers without feedback. However, for other network structures, convergence can be fastest by striking a cooperative balance, both in terms of the asymmetric coupling of layers as well as a balance for their respective time scales.
Our work  highlights \emph{optimization through coupling asymmetry and time-scale balancing} as   important directions for understanding and engineering collective dynamics over   human-AI consensus systems and other interconnected networks in general. Moreover, because graph Laplacian matrices are so widely used to study physical, biological and technological systems, our findings are relevant and broadly informative for many other types of dynamics.

This paper is organized as follows. In Sec.~\ref{sec:model}, we introduce the model that we study. In Sec.~\ref{sec:asymmetry}, we present experiments highlighting various effects of coupling asymmetry. In Sec.~\ref{sec:theory}, we present theoretical results including an existence guarantee for   a  cooperative optimum. In Sec.~\ref{sec:AI}, we apply the framework to  model  collective decisions by  human-AI teams.
A discussion is provided in Sec.~\ref{sec:Disc}.

\vspace{-.2cm}

\section{Model}\label{sec:model}

We first define a model    for multiplex networks   with asymmetrically coupled   layers (Sec.~\ref{sec:model1}) and a model for collective dynamics over such networks (Sec.~\ref{sec:model2}).
Our formulation has three tunable parameters: $\omega$ and $\delta$ control the strength and asymmetry of coupling between layers, respectively, whereas $\chi$ controls the different timescales of dynamics in separate layers.

\subsection{Multiplex Networks with Asymmetric Coupling}\label{sec:model1}

We begin by defining supraLaplacian matrices for multiplex networks with asymmetrically coupled   layers and by formulating an interconnected consensus model for collective dynamics. Consider a multiplex network with $T$ network layers, each consisting of $N$ nodes. For each   layer $t\in\{1,\dots,T\}$, we let ${\bf A}^{(t)}\in\mathbb{R}^{N\times N}$ be its ``{intralayer}'' adjacency matrix and $\mathbf{L}^{(t)}={\bf D}^{(t)}-{\bf A}^{(t)}$ be its {intralayer}  unnormalized Laplacian matrix, where  ${\bf D}^{(t)}$ is a diagonal matrix that encodes the nodes'  weighted in-degrees, $\mathrm{D}_{ii}^{(t)}=\sum_j \mathrm{A}_{ij}^{(t)}$ (also called `strengths'). Note that our notational convention is to let $A_{ij}>0$ encode the weight for an edge from node $j$ to $i$.
Matrices ${\bf A}^{(t)}$ and $\mathbf{L}^{(t)}$ are size $N\times N$ and are asymmetric if  network  layer $t$  contains directed edges. 

We couple the layers using an  ``interlayer'' adjacency matrix 
$\bm{A}^\mathrm{I}(\delta) =  (1+\delta) \hat{{\bm{A}}}^\mathrm{I} + (1-\delta)[ \hat{{\bm{A}}}^\mathrm{I}]^{\mathrm{T}}$,  
where $\delta\in[-1,1]$ is an \emph{asymmetry parameter} that tunes  the magnitude and direction of coupling asymmetry and  $\hat{{\bm{A}}}^\mathrm{I}$ is an adjacency matrix for a graph in which all edges are strictly directed (i.e., there no bidirectional edges or self edges).   Each positive matrix element $\hat{A}_{st}^\mathrm{I}>0$ encodes a directed influence from network layer $t$ to layer $s$.
When $\delta\not=0$, the coupling between layers is biased in a particular direction that is encoded by $\hat{{\bm{A}}}^\mathrm{I}$, whereas the coupling is symmetric when $\delta=0$.
We refer the situation of $\delta\in\{-1,1\}$ as \emph{fully asymmetric coupling}, and it can possibly yield situations in which a  network layer influences other layers but itself is not influenced by any other layer. Note also that we can equivalently define
$ \bm{A}^\mathrm{I}(\delta)=  {\bm{A}}^{\mathrm{I}}_+ +\delta  {\bm{A}}^{\mathrm{I}}_-$, where $  {\bm{A}}^\mathrm{I}_+ = \hat{\bm{A}}^\mathrm{I}+ [\hat{\bm{A}}^\mathrm{I}]^{\mathrm{T}}$ and $  {\bm{A}}^\mathrm{I}_- = \hat{\bm{A}}^\mathrm{I}- [\hat{\bm{A}}^\mathrm{I}]^{\mathrm{T}}$ are symmetric and skew-symmetric matrices, respectively.

Given $\bm{A}^\mathrm{I}(\delta) $, we define an associated interlayer unnormalized Laplacian  $\bm{L}^\mathrm{I}(\delta)=\bm{D}^\mathrm{I}(\delta)-\bm{A}^\mathrm{I}(\delta)$, where  $\bm{D}^\mathrm{I}(\delta)$ is a diagonal matrix with entries 
${D}^\mathrm{I}_{ss} (\delta)= \sum_t A_{st}^\mathrm{I}(\delta)$.
%
It is also useful to define an equivalent formulation,
\begin{equation}
\bm{L}^\mathrm{I}(\delta)=  {\bm{L}}^\mathrm{I}_+ +\delta  {\bm{L}}_-^\mathrm{I},
\label{eq:Ld}
\end{equation}
where 
$ {\bm{L}}^\mathrm{I}_+ =  {\bm{D}}^\mathrm{I}_+ -  {\bm{A}}^\mathrm{I}_+$, $ {\bm{L}}^\mathrm{I}_- =  {\bm{D}}^\mathrm{I}_-  - {\bm{A}}^\mathrm{I}_-$,  $[D^\mathrm{I}_+]_{ss} = \sum_t  [A^\mathrm{I}_+]_{st} $,   and $ [D^\mathrm{I}_-]_{ss} = \sum_t [A^\mathrm{I}_-]_{st} $. Note that $ {\bm{L}}^\mathrm{I}_+$ is a Laplacian matrix that is associated with an undirected graph, whereas $ {\bm{L}}^\mathrm{I}_-$ may be interpreted as a Laplacian for a directed, signed graph that has a very particular structure:  for   any positive edge weight $[A^\mathrm{I}_-]_{st} >0$, the reciprocal edge must exist and have negative  weight $[A^\mathrm{I}_-]_{ts}  = -[A^\mathrm{I}_-]_{st} $. (We note that there are   other ways to define   signed Laplacians \cite{kunegis2010spectral,pan2016laplacian}.)
In the case of  $T=2$   layers, such as the multiplex network shown in Fig.~\ref{fig1}, we define   $\hat{\bm{A}}^\mathrm{I}  = \begin{pmatrix} 0 & 0\\ 1& 0 \end{pmatrix}$ and
\begin{align}\label{eq:LI_2}
    \bm{L}^{ \mathrm{I}} (\delta)
	&=
	\begin{pmatrix}
	1 & -1\\
	-1& 1
	\end{pmatrix}
	+
		\delta\begin{pmatrix}
	 -1& 1\\
	-1& 1
	\end{pmatrix}.
\end{align}

We next define a supraLaplacian  matrix following \cite{gomez2013diffusion}
by scaling each $\bm{L}^\mathrm{I}(\delta)$ by a \emph{coupling strength} $\omega>0$ to construct a supraLaplacian matrix 
\begin{equation}\label{eq:supra}
    \mathbb{L}(\omega,\delta) = \mathbb{L}^{\mathrm{L}} + \omega\mathbb{L}^{\mathrm{I}} (\delta),   
\end{equation}
where $\mathbb{L}^{\mathrm{L}} = \text{diag}[  \mathbf{L}^{(1)} ,\dots ,\mathbf{L}^{(T)}]$ contains intralayer Laplacians as diagonal blocks, and $\mathbb{L}^{\mathrm{I}} (\delta)= \bm{L}^\mathrm{I}(\delta)\otimes {\bf I}$ couples the layers in a way that is \emph{uniform} (i.e., any   coupling between two given layers is the same) and \emph{diagonal} (i.e., any coupling between layers connects a node in one layer to itself in another layer) \cite{taylor2017eigenvector,taylor2019tunable}.
Symbol $\otimes$ indicates the Kronecker product. Note that $\mathbb{L}(\omega,\delta)$ is a size-$(NT)$ square matrix, and we will enumerate its rows an columns by $p\in\{1,\dots,NT\}$.
 
Under the choice $\delta=0$, $\mathbb{L}(\omega,\delta)$ is a symmetric matrix, and it recovers previously studied supraLaplacians \cite{tejedor2018diffusion,sole2013spectral,wang2021unique,gomez2013diffusion,taylor2020multiplex}, which
that  have been used to study diffusion and random walks over multiplex networks with layers that are symmetrically coupled using undirected interlayer edges. Understanding the spectral properties of supraLaplacians has revealed   novel insights including   ``superdiffusion''   \cite{tejedor2018diffusion,cencetti2019diffusive,wang2021unique}, 
whereby diffusion over coupled networks is faster than that of any single network layer, if it were in isolation. 
The study of synchronization over multiplex networks has similarly led to discoveries including the observation the networks coupled with moderate coupling strength have better synchronizability \cite{sole2013spectral} and other insights \cite{aguirre2014synchronization,gambuzza2015intra,jalan2016cluster,sawicki2018delay,liu2020intralayer}. Despite this progress, the effects of asymmetric coupling  on multiplex-network dynamics remains under-explored.

\begin{figure}[t]
	\centering 
	\includegraphics[width=3.1in]{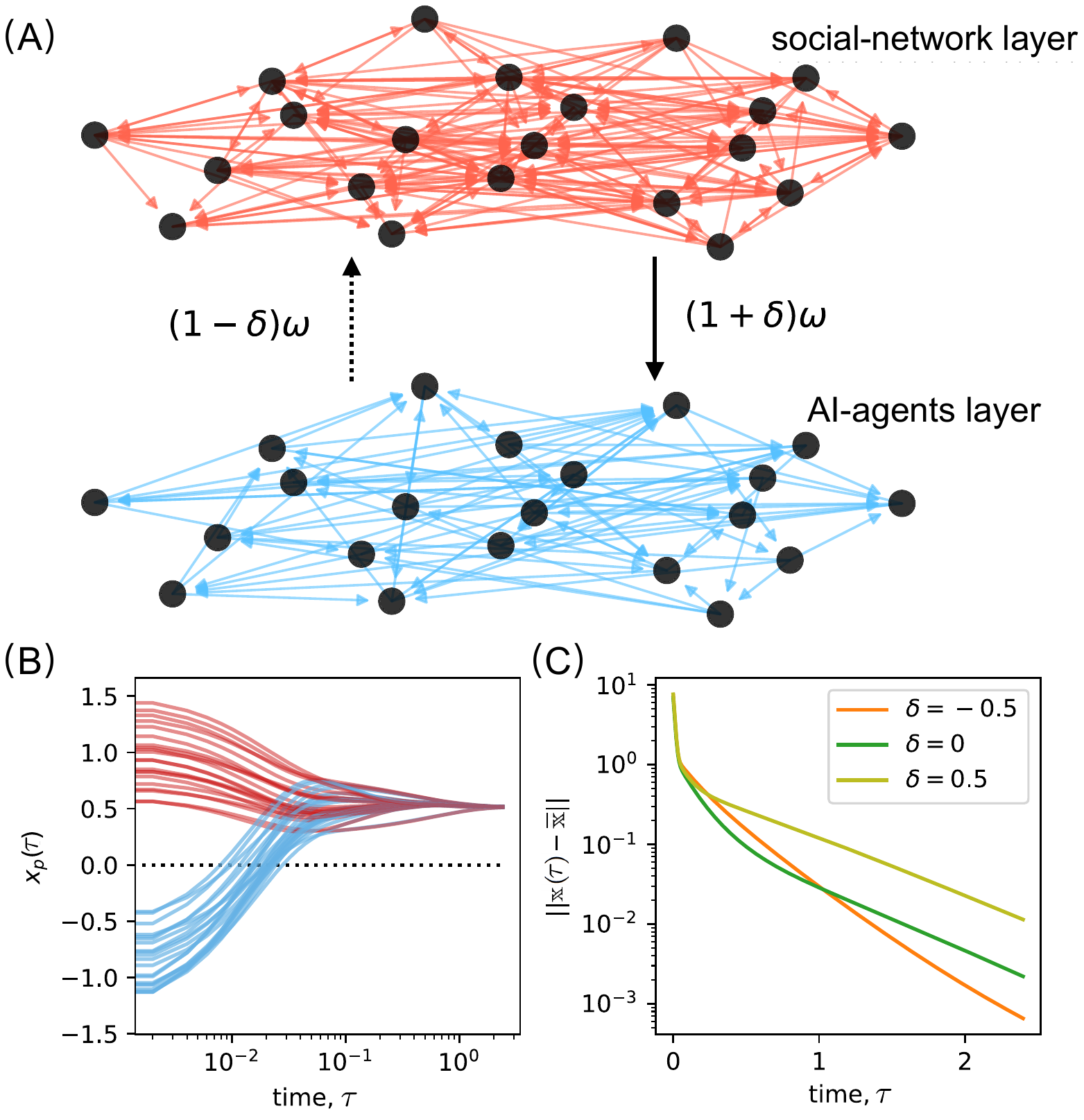}
	\vspace{-.2cm}
	\caption{
		{\bf Asymmetric coupling  biases the  limit and convergence rate for interconnected consensus systems.}
		(A)~A two-layer network in which layer 1 is an empirical social network that encodes mentoring relationships among corporate executives \cite{krackhardt1987cognitive}. Layer 2 models a communication network over which AI-agents cooperatively learn, which we generate as a random directed graph. The AI-agents provide decision support to the executives through   ``interlayer'' edges of weight $(1 \pm \delta)\omega$, where $\omega$ and $\delta$ encode the  strength and asymmetry, respectively, of coupling between the two networks.
		(B)~Nodes' states $x_p(\tau)$  reach consensus (i.e., a collective decision) following \eqref{eq:con}, which uses a supraLaplacian matrix $\mathbb{L}(\omega,\delta)$ with $(\omega,\delta)=(30,0.5)$ and converges to a limit $\overline{x}$ that is a weighted average of the initial condition with weights that are biased by $\delta$.
		(C)~Convergence  $x_p(\tau) \to\overline{x}$ is shown to occur faster  for $\delta=-0.5$ than for $\delta \in\{0,0.5\}$. For each $\delta$, the   convergence rate $\text{Re}(\lambda_2)$ gives the curve's asymptotic slope for large $\tau$.
	}
	\label{fig1}
\end{figure}

\subsection{Asymmetrically Coupled Consensus Systems}\label{sec:model2}

Consensus  is a popular model for collective decision  making  in the cognitive, social and biological sciences \cite{hinsz1990cognitive, 
flood2000chief,mohammed2001toward,conradt2005consensus}, and it  also provides a foundation for  decentralized algorithms for neural networks and machine learning \cite{bijral2017data,assran2019stochastic,niwa2020edge,vogels2020powergossip,huynh2021,kong2021consensus}. 
Thus motivated, we propose a model for interconnected consensus systems via the following linear ordinary differential equation,
\begin{align}\label{eq:con}
    \frac{d}{d\tau} \mathbbm{ x}(\tau) &= - \mathbb{L}(\omega,\delta) \mathbbm{ x}(\tau),
\end{align}
where $\mathbbm{x}(\tau)=[x_1(\tau),\dots, x_p(\tau),\dots, x_{NT}(\tau)]^\mathrm{T}$ is a length-$NT$ vector. 
Each $x_p(\tau)$ encodes the state of node $i_p = (p \text{ mod } N)$ in layer $t_p = \lceil p/N \rceil $  at time $\tau$.  (We  let  $p\in\{1,\dots,NT\}$ and use mod$(\cdot)$ and $\lceil \cdot \rceil $ to denote the modulus and ceiling function, respectively.) 
In Sec.~\ref{sec:AI}, we interpret \eqref{eq:con} as a simple-yet-informative model for collective decisions in a human-AI system in which individuals in a social network  are supported by AI agents, who themselves coordinate and collectively learn.

Equation \eqref{eq:con} can be considered as the ``multiplexing'' of two consensus processes: consensus within each network layer and consensus across layers. That is, one could define an intralayer consensus dynamics  for each network layer $t$:
$\frac{d}{d\tau} {\bf x}^{(t)}(\tau) = - \mathbf{L}^{(t)}  {\bf x}^{(t)} (\tau)$ with initial condition ${\bf x}(0)\in\mathbb{R}^N$. 
Similarly, one can define an interlayer consensus dynamics by
$\frac{d}{d\tau} \bm{ x}(\tau) = - \omega\bm{L}^\mathrm{I}(\delta) \bm{ x} (\tau)$ with some initial condition $\bm{x}(0)\in\mathbb{R}^{T}$. 
In this context, the scaling by $\omega$ controls the   timescale of consensus   across layers as compared to consensus within layers. As such, it is important to consider dynamics for a wide range of $\omega$ values. 
We also note that each of these differential equations can be interpreted as a type of Abelson model \cite{abelson1964mathematical,abelson1967mathematical} for opinion dynamics.


One can also vary the  timescales for dynamics and consensus within each separate layer by scaling each $\mathbf{L}^{(t)} $ by some nonnegative constant. 
Focusing on the case of $T=2$ layers, replace the intralayer Laplacians by 
   $ \mathbf{L}^{(1)}\mapsto \chi \mathbf{L}^{(1)}$ and
    $\mathbf{L}^{(2)} \mapsto (1-\chi) \mathbf{L}^{(2)}$
where 
$\chi\in(0,1)$ is a a \emph{rate-scaling parameter} that
controls whether  the layers' timescales are equally balanced ($\chi\approx 0.5$), whether layer 1 is much faster than layer 2 ($\chi\approx1$), or vice versa ($\chi\approx0$). 
We will initially not include $\chi$ in our model and will investigate it later in Sec.~\ref{sec:exist} and Sec.~\ref{sec:AI}.


Although we focus here on consensus dynamics, it would be straightforward to utilize matrix
$\mathbb{L}(\omega,\delta)$ to formulate models for  diffusion, synchronization and other Laplacian-based dynamical processes over multiplex networks with asymmetrically coupled layers. For example, the substitution of $\mathbb{L}(\omega,\delta)\mapsto\mathbb{L}(\omega,\delta)^{\mathrm{T}}$ in \eqref{eq:con} would yield a model for diffusion.

\begin{figure*}[!t]
	\centering
	\includegraphics[width=1\linewidth]{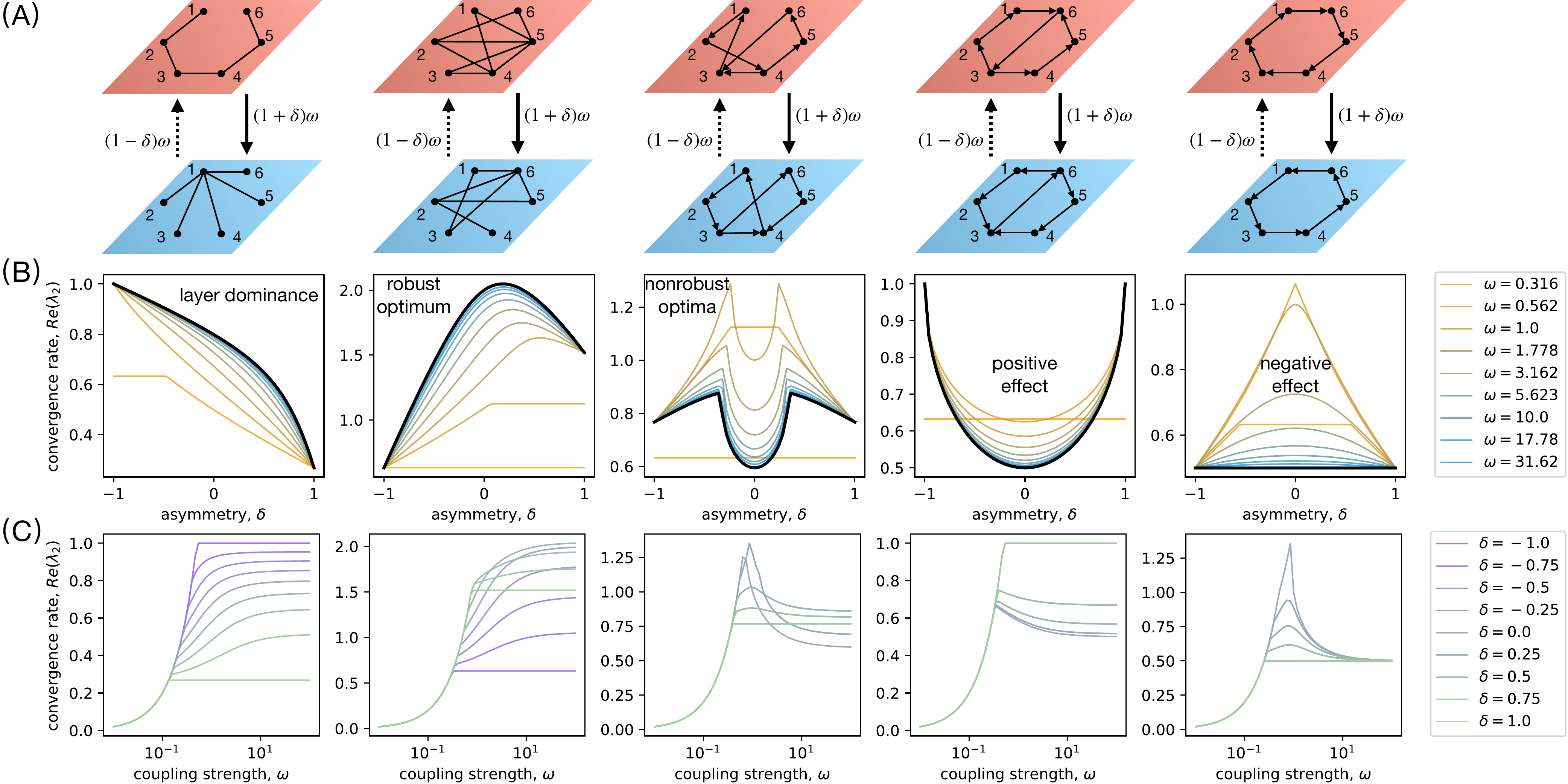}
	\caption{{\bf  Characteristic ways in which coupling asymmetry can affect convergence rate.}
		(A)~Interconnected consensus systems 
		with $N=6$ nodes and $T=2$ layers that are asymmetrically coupled through the interlayer Laplacian matrix defined in \eqref{eq:LI_2}.
		(B)~For each respective system, we plot the convergence rate $\text{Re}(\lambda_2)$  versus $\delta$. Colored curves depict different   $\omega$ (see legend). Observe for different networks that  asymmetry (i.e., $\delta\neq 0$) has remarkably diverse effects in which it either speeds/slows convergence depending on the sign and magnitude of $\delta$ (e.g., see columns 1 and 2), always accelerates  convergence (e.g., see columns 3 and 4),  or always slows convergence (e.g., see column  5). Black curves depict our theoretical prediction for large $\omega$ given by \eqref{eq:lim_lam2}--\eqref{eq:L_bar_2}.
		(C)~For each  system, we   plot $\text{Re}(\lambda_2)$ versus coupling strength $\omega$.  Observe for all systems that $\delta$ impacts $\text{Re}(\lambda_2)$ only when $\omega$ is sufficiently large and that some systems exhibit an \emph{intermediate coupling optimum} (ICO), which is a peak that can either be exaggerated or diminished by introducing coupling asymmetry.
		}
	\label{fig:behaviors_toy}
\end{figure*}

\section{  Effects of Coupling Asymmetry}\label{sec:asymmetry}

We first show how coupling asymmetry can have various effects on solutions to 
\eqref{eq:con}. In Sec.~\ref{sec:limit_rate}, we highlight that asymmetry can   bias the consensus limit and either increase or decrease the convergence rate.
In Sec.~\ref{sec:asaaq}, we discuss the optimization of coupling asymmetry $\delta$ to maximize the convergence rate.
 In Sec.~\ref{sec:large}, we study the effects on  random multiplex networks.

\subsection{Impact on Consensus Limit and Convergence Rate}\label{sec:limit_rate}

We assume that the smallest eigenvalue $\lambda_1 =0$ of $\mathbb{L}(\omega,\delta)$ is simple (i.e., has algebraic and geometric multiplicities equal to one), which is guaranteed, e.g., if the interlayer network and aggregated network are both strongly connected \cite{taylor2020multiplex}. Then  for any real-valued initial condition $\mathbbm{  x}(0) $,  \eqref{eq:con} converges to an equilibrium $\mathbbm{x}( {\tau} )\to \overline{\mathbbm{ x}}= [\overline{x},\dots, \overline{x}]^{\mathrm{T}}$, which is the right eigenvector associated with $\lambda_1 $. The \emph{consensus limit} is reached at a scalar value  
\begin{align}
    \overline{x} = \sum_p u_p x_p(0) / \sum_{p'}u_{p'},
\end{align}
which is a weighted average of the initial states, and the weights $u_p$ are entries of the left eigenvector $\mathbbm{u}$ of $\mathbb{L}(\omega,\delta)$ that is associated with $\lambda_1$. The \emph{asymptotic convergence rate},       
\begin{align}\label{eq:conv}
       -\limsup_{\mathbbm{  x}(0) } \frac{1}{\tau^*} \int_{0}^\mathrm{\tau^*}  \log \left(\frac{||\mathbbm{  x}(\tau) - \overline{\mathbbm{x}}||}{||\mathbbm{x}(0) - \overline{\mathbbm{  x}}||} \right) d\tau \le \text{Re}(\lambda_2),
\end{align}
is bounded by $\text{Re}(\lambda_2)$, the real part of the eigenvalue of $\mathbb{L}(\omega,\delta)$ that has   second-smallest real part. Throughout this manuscript, we refer to $ \text{Re}(\lambda_2)$ simply as the ``convergence rate'', keeping in mind  that it is a bound on the asymptotic behavior of convergence.

In Fig.~\ref{fig1}, we show how coupling asymmetry can bias the collective state $\overline{x}$  as well as  the   convergence rate $\text{Re}(\lambda_2)$.  
In this example,  we study  \eqref{eq:con} for $T=2$ network   layers with $\bm{L}^{ \mathrm{I}} (\delta)$ given by
\eqref{eq:LI_2}. The first intralayer consensus system   models  group decision making within an empirical social network that encodes mentoring relationships among corporate executives  \cite{krackhardt1987cognitive}. It  contains $N=21$ nodes and    190 directed edges and was downloaded from \cite{man_data}. 
The second intralayer consensus system represents  AI agents that support the executives' decisions, and we model their communication by a directed $4$-regular graph  that we generated using the configuration model. We note that we have selected this empirical example because  collective human-AI decision making is   already widespread in military \cite{rasch2003incorporating,zhou2019bayesian,azar2021drone} 
and financial contexts \cite{albadvi2007decision,chou1997stock,yu2005designing}, 
and we predict that it will become increasingly popular in corporate settings in the near future.

In Fig.~\ref{fig1}(B), we show  converging trajectories $x_p( {\tau} )$ under the parameter choices $(\omega,\delta)=(30,0.5)$   
for an initial condition where $x_p( {\tau} )$ are positive for nodes in layer 1 and negative for nodes in layer 2.  Observe  that  the limit $\overline{x}>0$ is biased to be positive, implying that the social network has a ``stronger say''  than the AI agents regarding the   state at which consensus is reached. This occurs here because we chose $\delta>0$,  implying that the social network more strongly influences the AI agents, that is, as compared to the reciprocal relationship.
However, observe in Fig.~\ref{fig1}(C)  that  convergence is slower for this value of $\delta$ as compared to the other shown values: $\delta\in\{0,-0.5\}$. 
Later in Sec.~\ref{sec:AI}, we will present an extended study of the network shown in Fig.~\ref{fig1}, showing that $\text{Re}(\lambda_2)$ has a maximum near  $ {\delta}\approx -0.6$. We will also further discuss the implications of our work for the application area of human-AI systems.

\subsection{Optimizing Asymmetry for Fast Convergence}\label{sec:asaaq}

For many applications it is beneficial for consensus to converge quickly, and we are particularly interested in understanding the value (or values) of $\delta$ that maximize the convergence rate:
$\hat{\delta}=\text{argmax}_\delta \text{Re}(\lambda_2)$. To this end, it is beneficial to first gain a broader understanding for the diverse ways that $\delta$ can influence $\text{Re}(\lambda_2)$. Below, we
offer a characterization of several possible ways for how the convergence rate $\text{Re}(\lambda_2)$ toward a collective state can be affected by the asymmetric coupling of  network layers. The columns of Fig.~\ref{fig:behaviors_toy}  highlight  five  distinct behaviors for how asymmetric coupling can affect $\text{Re}(\lambda_2)$ and its optima  $\hat{\delta}=\text{argmax}_\delta \text{Re}(\lambda_2)$. These are:

\begin{itemize}
    \item[(i)] \emph{layer dominance:} $\text{Re}(\lambda_2)$ monotonically increases or decreases with $\delta$ and obtains a maximum  at either $\hat{\delta}   =  \pm1$; 
    \item[(ii)] \emph{robust cooperative optimum:} $\text{Re}(\lambda_2)$ obtains a  maximum at   some value $\hat{\delta}\in(-1,1)$, and $\text{Re}(\lambda_2)$ is differentiable with respect to $\delta$ at the optimum;
    \item[(iii)] \emph{nonrobust cooperative optima:} $\text{Re}(\lambda_2)$ obtains a maximum at some  value  $\hat{\delta}\in(-1,1)$, but $\text{Re}(\lambda_2)$ is not differentiable with respect to $\delta$ at the optimum;
    \item[(iv)] \emph{strictly positive effect:} $\text{Re}(\lambda_2)$ is a non-decreasing function of  $|\delta|$ (i.e., slowest  convergence    at $\delta=0$);
    \item[(v)] \emph{strictly negative effect:} $\text{Re}(\lambda_2)$ is a non-increasing function of $|\delta|$ (i.e., fastest  convergence    at $ {\delta}=0$).
\end{itemize} 
 
Notably, these characterizations are an incomplete list. Future work will likely reveal other possible behaviors, thereby broadening our understanding of how the asymmetric coupling of networks can impact a combined system's convergence rate and other dynamical properties for collective dynamics. 

We provide    examples that  exhibit the behaviors (i)--(v) in the five columns, respectively,  of  Fig.~\ref{fig:behaviors_toy}. We visualize the multiplex networks in Fig.~\ref{fig:behaviors_toy}(A), and for each system, we plot $\text{Re}(\lambda_2)$ versus $\delta$  in Fig.~\ref{fig:behaviors_toy}(B) for different choices of $\omega$. The solid black curves in Fig.~\ref{fig:behaviors_toy}(B) are a theoretical predict that we will present   in Sec.~\ref{sec:thy}. The systems depicted in columns 1 and 4 of Fig.~\ref{fig:behaviors_toy} exhibit behaviors (i) and (iv), respectively, which exhibit non-cooperative optima.  That is, convergence toward the collective state occurs optimally fast only when one   layer influences the other   without a reciprocated influence (i.e., $\hat{\delta}\pm1$). In contrast, the systems depicted in columns 2, 3 and 5   exhibit behaviors (ii), (iii) and (v),   which have cooperative  optima. Convergence is fastest when the layers mutually influence one another (i.e., $|\hat{\delta}|<1$).
We further note that  it is beneficial to distinguish between robust vs non-robust optima, and we   call an optimum ``robust'' if and only if   $\frac{d}{d\delta}\text{Re}(\lambda_2) $ is zero  at the optimum. Robustness is important when considering the possible effects on $\text{Re}(\lambda_2)$ of  uncertainty for $\delta$ near the optimum.  

In Fig.~\ref{fig:behaviors_toy}(C), we  plot $ \text{Re}(\lambda_2)$  versus   $\omega$ to  investigate the relation between coupling asymmetry $\delta$ and coupling strength $\omega$. First, observe that there exists a peak for the third, fourth and fifth systems for some intermediate value of $\omega$, but not the  first or second system. Such a peak corresponds to an optimal choice of $\omega$ at which the convergence rate maximized.
We refer to this phenomenon as an \emph{intermediate coupling optimum} (ICO), and this spectral property for $\lambda_2$ has been previously called ``superdiffusion''  \cite{tejedor2018diffusion,cencetti2019diffusive,wang2021unique}
 in the context of diffusion on multiplex networks. In fact,  the first, second and fifth  systems were previously used to study diffusion over multiplex networks  in \cite{taylor2020multiplex},  \cite{gomez2013diffusion} and \cite{tejedor2018diffusion}, respectively. Those studies were restricted to the assumption of symmetric coupling in which  $\delta=0$. Extending that work, here we show that there exists a similar ICO phenomenon for multiplexed consensus systems, and that the size of the peak can be either exaggerated or inhibited depending on the direction and magnitude of coupling asymmetry.

\subsection{Effects on Random Multiplex Networks}\label{sec:large}

\begin{figure*}[h]
	\centering
	\includegraphics[width=1\linewidth]{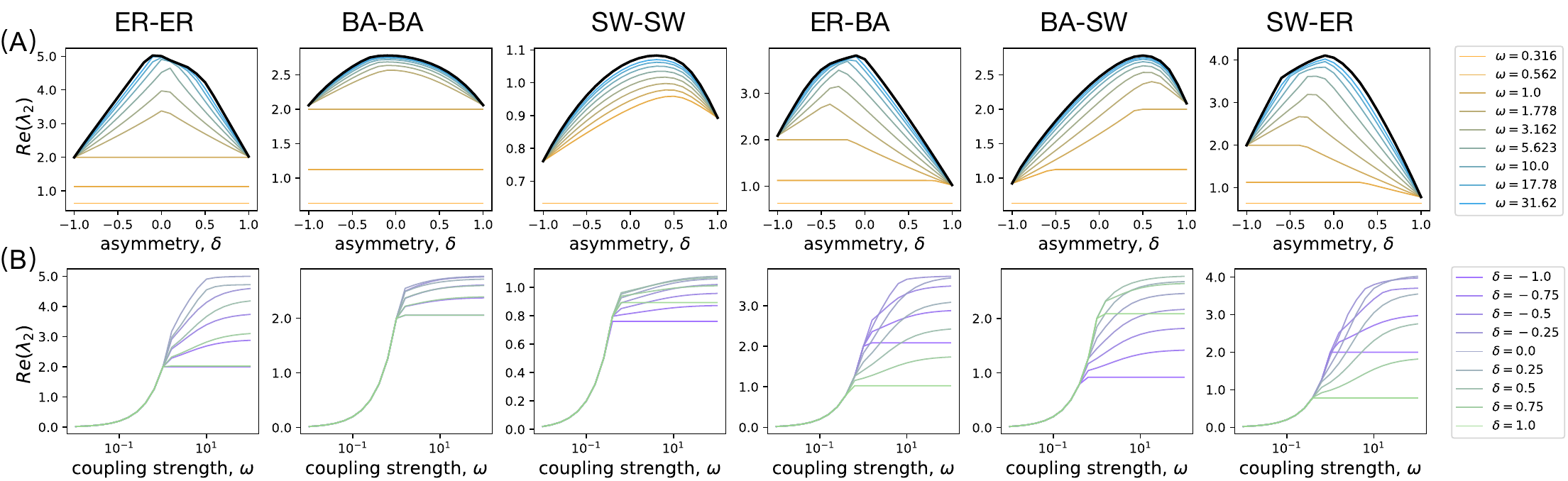}
	\caption{ 
	{\bf  Cooperative optima for random multiplex networks with layers sampled from   random-graph models.}
	We study $\text{Re}(\lambda_2)$ versus either (A) $\delta$ or (B) $\omega$ for six multiplex networks, each having $T=2$ layers  coupled by the interlayer Laplacian matrix defined in \eqref{eq:LI_2}. Each network layer is sampled from one of three generative models:	
	Erd\"os-R\'enyi graphs (ER), Barab\'asi-Albert graphs (BA), and Watts-Strogatz small-world graphs (SW). 
Black curves in (A) depict a theoretical prediction for large $\omega$ that we will develop in Sec.~\ref{sec:theory}. 
	}
	\label{fig:behaviors_RG}
\end{figure*}

We conclude this section by studying $\text{Re}(\lambda_2)$ for random multiplex networks with $T=2$ layers, each of which is generated by two of the following generative models:

\begin{itemize}
\item[(i)]   Erd\"os-R\'enyi (ER) graphs \cite{erdds1959random} in which each edge   
is independently created as a Bernoulli random variable 
probability $p$. 
\item[(ii)]   Barab\'asi-Albert (BA) graphs \cite{barabasi1999emergence}, which are grown by attaching each new node to $m$ edges that are preferentially selected based on their  degree. 
\item[(iii)]   Watts-Strogatz small-world (SW) graphs \cite{watts1998collective}, which are created by first assigning nodes positions along a ring and by creating edges between each node and its $k$ nearest neighbors. Then, each edge is replaced by a new, randomly selected edge with probability $p$.
\end{itemize}
We selected these three models, since they give rise to different well-known properties: degree-homogeneity for ER graphs; degree-heterogeneity for BA graphs; and the small-world property for SW graphs.
Unless otherwise specified, we construct multiplex networks with $N=500$ nodes and choose $p=0.02$ for the ER   layers,  $m=4$ edges for the BA   layers, and $(p,k)=(0.2,8)$ for  the  SW   layers. Additional parameter choices are studied in Appendix~\ref{sec:appendix_large}.




In Fig.~\ref{fig:behaviors_RG},  we study $\text{Re}(\lambda_2)$ for six multiplex networks in which each layer is generated by one of the three generative models.  (For example, ``ER-BA'' indicates that the first layer is an ER graph, whereas the second is a BA graph.) 
Similar to Figs.~\ref{fig:behaviors_toy}(B) and (C), we   plot   $\text{Re}(\lambda_2)$ versus $\delta$ in Fig.~\ref{fig:behaviors_RG}(A) and   $\text{Re}(\lambda_2)$ versus    $\omega$ in Fig.~\ref{fig:behaviors_RG}(B).
Our main observation in  Fig.~\ref{fig:behaviors_RG}(A) is that these six random multiplex networks all exhibit a cooperative optimum for sufficiently large $\omega$.


In Sec.~\ref{sec:theory}, we will develop and apply theory to shed light on  the various dynamical and structural mechanisms that can give rise to behaviors (i)--(v) and yield optimally coupled systems that are either cooperative or non-cooperative. Our analytical approach is motivated by the following observations. Observe  in each panel of Figs.~\ref{fig:behaviors_toy}(B) and  \ref{fig:behaviors_RG}(A) that the  different curves   reflect different choices for the interlayer coupling strength $\omega$ and that the  qualitative effects on $\text{Re}(\lambda_2)$ of $\delta$  are consistent across a wide range of  $\omega$ values. Moreover, observe in each panel of Figs.~\ref{fig:behaviors_toy}(C) and  \ref{fig:behaviors_RG}(B) that there exists a critical value of  $\omega$ below which $\delta$ has no observable effect on $\text{Re}(\lambda_2)$. That is, the various effects of coupling asymmetry only arise when $\omega$ is sufficiently large. [Interestingly, the fifth system in  Fig.~\ref{fig:behaviors_toy} is an exception, since $\delta$ appears to affect   $\text{Re}(\lambda_2)$ only for intermediate values of $\omega$ near the  ICO peak.] 
With this in mind, in the next section we present theory to predict the effects of coupling asymmetry on $\text{Re}(\lambda_2)$ for when the layers are strongly coupled  with large $\omega$.

\section{Theoretical Results}\label{sec:theory}

We now present our main theoretical findings.
In Sec.~\ref{sec:thy}, we analyze $\text{Re}(\lambda_2)$ in the limit of strong interlayer coupling.
%
In Sec.~\ref{sec:exist}, we build on these results to provide criterion that guarantees the existence of a cooperative optimum. Our derivations are defered to appendices.

\begin{figure*}[h]
    \centering 
    \includegraphics[width=.99\linewidth]{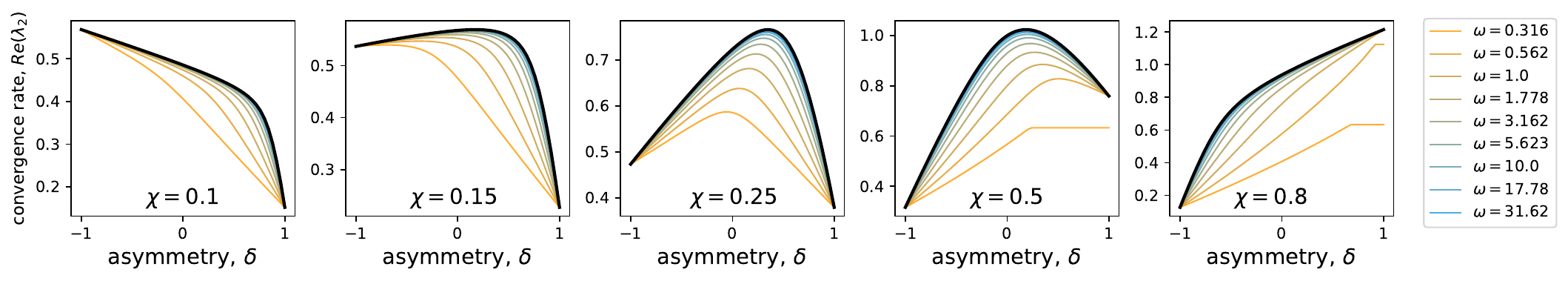}    
    \caption{{\bf Layers' relative timescales influence if optima are cooperative vs. non-cooperative.}
    We plot the convergence rate $\text{Re}(\lambda_2)$ versus $\delta$ for the system that was  visualized in the second column of Fig.~\ref{fig:behaviors_toy}(A).  Different columns correspond to different choices for a rate-scaling parameter $\chi\in(0,1)$, which we introduce to tune whether layer 1 is much faster ($\chi\approx 1$) or layer 2 is much faster ($\chi\approx 0$). 
    Colored curves yield results for different $\omega$ (see legend), and black curves depict our theoretical prediction for large $\omega$ that is given by \eqref{eq:lim_lam2}--\eqref{eq:L_bar_2}.
By comparing across the columns, observe that their optima are cooperative for intermediate values of $\chi$ (i.e., the second, third and fourth columns) and  non-cooperative when $\chi$ is sufficiently small or large (see left-most and right-most columns).
    }
    \label{fig:timescales_Simple}
\end{figure*}

\subsection{Theory for Strong Coupling,  $\omega\to\infty$.}\label{sec:thy}

To provide  analytical guidance, we  characterize the dependence of $\text{Re}(\lambda_2)$ on $\delta$ using spectral perturbation theory  for  multilayer networks with asymmetric matrices \cite{taylor2019tunable,taylor2020multiplex}. We present the derivations in Appendix \ref{sec:appendix_A}  and summarize our findings here. The black curves in Fig.~\ref{fig:behaviors_toy}(B)  and Fig.~\ref{fig:behaviors_RG}(A) depict our predictions for large $\omega$,
\begin{align}\label{eq:lim_lam2}
    \lim_{\omega\to\infty} \lambda_2  &= \overline{\lambda}_2(\delta),
\end{align}
where $\overline{\lambda}_2(\delta)$ is the   eigenvalue  of $\overline{\mathbf{L}}(\delta)$ that has the second-smallest real part, and
\begin{eqnarray}\label{eq:L_bar}
    \overline{\mathbf{L}}(\delta)
    &=&   \sum_{t=1}^{T} w_{t}(\delta)    \mathbf{L}^{(t)}
\end{eqnarray}
is a weighted average of the layers' Laplacian matrices. The weights $w_t (\delta) = u_t(\delta)/ {\sum_{t'} u_{t'}(\delta)}$ come from the entries of the left eigenvector $\bm{u}(\delta)=[u_1(\delta),\dots,u_T(\delta)]^\mathrm{T}$ that is associated with the zero-valued (i.e., trivial) eigenvalue of    $\bm{L}^\textrm{I}(\delta)$. (See \cite{gomez2013diffusion,sole2013spectral} for results that are similar to \eqref{eq:L_bar} but which assume symmetric coupling, $\delta=0$).

Equations~\eqref{eq:lim_lam2}--\eqref{eq:L_bar} imply that when a multiplex consensus system is strongly coupled, the convergence rate is identical to that for consensus on an ``effective'' network  that is associated with a Laplacian matrix $\overline{\mathbf{L}}(\delta)$, and the effects of $\delta$ can be examined by considering the   dependence of $w_t(\delta)$ on $\delta$. For example, for $T=2$ layers, the interlayer Laplacian $\bm{L}^\textrm{I}(\delta)$ is given by \eqref{eq:LI_2},  $\bm{u}(\delta)=[ 1+\delta ,1-\delta ]^\mathrm{T}$ and we find
\begin{align} \label{eq:L_bar_2}
    \overline{\mathbf{L}}(\delta) = \left( \frac{1+\delta}{2} \right) \mathbf{L}^{(1)} + \left(\frac{1-\delta }{2}\right) \mathbf{L}^{(2)}.
\end{align}

Despite this simple form,  the associated convergence rate $\text{Re}(\overline{\lambda}_2(\delta) )$ can exhibit a   complicated dependence on $\delta$. For example, observe in Fig.~\ref{fig:behaviors_toy}(B) and Fig.~\ref{fig:behaviors_RG}(A) that in addition to being  accurate for large $\omega$, this  theory  predicts the qualitative behavior of the relationship between $\text{Re}(\lambda_2)$ and $\delta$  for a  broad range of $\omega$. 
At the same time, also observe that the characterization of the optimum as being cooperative vs. non-cooperative in the limit $\omega\to\infty$ also is predictive of that optimum for other choices of $\omega$.
%


One consequence of \eqref{eq:lim_lam2}--\eqref{eq:L_bar_2} is the following limiting behavior for when either layer 1 or layer 2 directs the other layer without feedback:
\begin{subequations}\label{eq:limlim}
\begin{align}
    \lim_{\delta\to1} \overline{\lambda}_2(\delta)  &\to {\lambda}_2^{(1)}, \\
    \lim_{\delta\to - 1} \overline{\lambda}_2(\delta)  &\to {\lambda}_2^{(2)},
\end{align}
\end{subequations}
where $\lambda_2^{(1)}$  and $\lambda_2^{(2)}$ are the second-smallest eigenvalues of $\mathbf{L}^{(1)}$ and $\mathbf{L}^{(2)}$, respectively. This follows immediately after considering that $\lim_{\delta\to 1}\overline{\mathbf{L}}(\delta) = \mathbf{L}^{(1)}$ and $\lim_{\delta\to -1}\overline{\mathbf{L}}(\delta) = \mathbf{L}^{(2)}$. Thus, the convergence rate of a single   layer  dictates the overall system's convergence when the coupling asymmetry implements non-cooperation.

Given the above theory, we can now better understand the results that were previously shown in Fig.~\ref{fig:behaviors_toy}(B). Recall that the black curves in that figure represent our analytical prediction   $\text{Re}(\overline{\lambda}_2(\delta))$. Moreover, for all five systems, we predict that $\text{Re}(\lambda_2 )$ converges to $\text{Re}(\lambda_2^{(2)})$ and $\text{Re}(\lambda_2^{(1)})$ in the limits $\delta \to-1$ and $\delta \to1$, respectively.
Interesting, our derivation of \eqref{eq:limlim} has assumed the limit of large $\omega$, but one can observe in Fig.~\ref{fig:behaviors_toy}(B) that it accurately predicts the limiting $\delta\to\pm1$ behavior of $\text{Re}(\lambda_2)$ for a wide range of coupling strengths $\omega$. That is, most of the  curves that represent different $\omega$ values converge to the same point on the left-hand and right-hand sides of each subpanel. We only observe \eqref{eq:limlim} to yield an inaccurate prediction for the  $\delta \to \pm 1$ limits of $\lambda_2$ when $\omega$ is very small [e.g., when $\omega=0.316$ in columns 1, 3 and 4 of  Fig.~\ref{fig:behaviors_toy}(B) or $\omega=\{0.316, 0.562\}$ in column 2 of Fig.~\ref{fig:behaviors_toy}(B)].
Similar results can also be observed in Fig.~\ref{fig:behaviors_RG}(A). For all six   multiplex networks, $\text{Re}(\lambda_2 )$ converges to $\text{Re}(\lambda_2^{(2)})$ and $\text{Re}(\lambda_2^{(1)})$ in the limits $\delta \to-1$ and $\delta \to1$, respectively, for a wide range of coupling strengths $\omega$.

Equation~\eqref{eq:limlim} also hints at why an optimally coupled system will exhibit layer dominance rather than a cooperative optimum. Consider the first column of Fig.~\ref{fig:behaviors_toy}, where we observed a non-cooperative optimum: $\text{Re}(\lambda_2)$ obtains its maximum at $\delta=-1$, whereby layer 2 influences layer 1 without feedback. In this case, layer 1 is an undirected chain graph that has a convergence rate of $\lambda_2^{(1)}\approx0.26$, whereas layer 2 is a star graph that has a convergence rate of $\lambda_2^{(2)}\approx1$. 
Thus, the convergence rate is faster for the star graph than the chain graph, and this particular system converges fastest when  the faster system (layer 2) non-cooperatively influences the slower one without feedback. 
%
In the next section, we develop theory that can help determine whether the optimum is cooperative and how that relates to the layers' individual timescales.

\subsection{ Existence Guarantee for a  Cooperative Optimum}\label{sec:exist}

We first show that a cooperative  optimum  vs. layer dominance occurs when the layers have a sufficiently similar timescale, and in fact, we can adjust a system between these two behaviors by varying their   timescales. Recall from the model definition in Sec.~\ref{sec:model2}
that we can use a \emph{rate-scaling parameter} $\chi\in(0,1)$ to vary the relative convergence rate for each layer.
That is, we define the the intralayer Laplacians 
    $\chi \mathbf{L}^{(1)} $
    and $ (1-\chi) \mathbf{L}^{(2)}$ so that their separate convergence rates are $\chi \text{Re}(\lambda_2^{(1)})$ and $(1-\chi) \text{Re}(\lambda_2^{(2)})$, respectively. It also then follows that \eqref{eq:L_bar_2} takes the form $\overline{\mathbf{L}}(\delta) = \frac{1 + \delta}{2}  \chi\mathbf{L}^{(1)} + \frac{1 - \delta}{2}  (1-\chi)\mathbf{L}^{(2)}$.

In Fig.~\ref{fig:timescales_Simple}
, we plot the convergence rate $\text{Re}(\lambda_2)$ versus   $\delta$ for 
 the system that was  visualized in the second system in Fig.~\ref{fig:behaviors_toy}(A).
 Different columns correspond to different choices for $\chi$. In each panel, different curves reflect different choices for $\omega$. Black curves indicate the analytical prediction for large $\omega$ given by \eqref{eq:L_bar}. By comparing across the columns of Fig.~\ref{fig:timescales_Simple}, observe  that their optima are cooperative for intermediate values of $\chi$ (i.e., the second, third and fourth columns) and  non-cooperative when $\chi$ is sufficiently small or large (see the left-most and right-most columns).  That is, we observe a cooperative optimum for these systems when the convergence rates for the two separate layers are sufficiently similar. Otherwise, in this experiment, we find that the convergence is fastest when the faster layer influences the slower one without a reciprocated influence, i.e., layer dominance. 
This observation is further supported  in Appendix \ref{sec:appendix_large}, where we provide additional figures similar to Fig.~\ref{fig:timescales_Simple} for the family of random multiplex networks described in Sec.~\ref{sec:large}.


We now present a criterion that predicts how  the optimal asymmetric coupling of layers can change between  cooperative and non-cooperative as one varies   $\chi$. We present this theory in  Appendix \ref{sec:appendix_B} and summarize our main findings here. We predict the existence/nonexistence of an optimum in the limit of large $\omega$ by considering the derivative
\vspace{-.01in} 
\begin{align}\label{eq:lag}
    \overline{\lambda}_2'(\delta) \equiv \frac{d}{d\delta} \text{Re} \left( \overline{\lambda}_2(\delta) \right) 
\end{align} 
and by invoking Rolle's Theorem \cite{sahoo1998mean} for a continuous function: if $\overline{\lambda}_2'(-1)>0$ and $\overline{\lambda}_2'(1)<0$,  then there exists at least one   optimum $\hat{\delta}=\text{argmax}_\delta \text{Re}(\lambda_2)$  that is cooperative, i.e., $\hat{\delta} \in(-1,1)$. In the limits $\delta\to \pm1$,  the derivatives   $\overline{\lambda}_2'(\delta) $  converge  to a simplified form:
\begin{subequations}\label{eq:lam2_prime}
\begin{align}
    \overline{\lambda}_2^{\prime}(1) &=
     \frac{-{\mathbf{u}^{(1)}}^* \mathbf{L}^{(2)} \mathbf{v}^{(1)}} {2{\mathbf{u}^{(1)}}^* \mathbf{v}^{(1)}} 
     + \chi 
     \frac{{\mathbf{u}^{(1)}}^*
     \left(\mathbf{L}^{(1)} + \mathbf{L}^{(2)}\right) \mathbf{v}^{(1)}} 
         {2{\mathbf{u}^{(1)}}^* \mathbf{v}^{(1)}},            \\
     \overline{\lambda}_2^{\prime}(-1)& =  \frac{-\lambda^{(2)}_2}{2}
     + \chi 
     \frac{{\mathbf{u}^{(2)}}^* 
     \left(\mathbf{L}^{(1)} +\mathbf{L}^{(2)} \right) \mathbf{v}^{(2)}} 
     {2{\mathbf{u}^{(2)}}^*  \mathbf{v}^{(2)}}.
\end{align}
\end{subequations}
Letting $t\in\{1,2\}$, here we define  $\mathbf{u}^{(t)}$ and $\mathbf{v}^{(t)}$ as the left and right eigenvectors, respectively, that are associated with the  eigenvalue $\lambda^{(t)}_2$ of $\mathbf{L}^{(t)}$ that has the second-smallest real part (assumed to be nonzero). Symbol $*$ denotes a vector's complex conjugate. Interestingly, \eqref{eq:lam2_prime} imply that the derivatives $\overline{\lambda}_2^{\prime}(\delta)$ at $\delta =\pm 1$ change linearly with the time-scaling parameter $\chi$.

By combining \eqref{eq:lam2_prime} with Rolle's Theorem, we can identify for each $\chi$, whether a cooperative optimum is guaranteed to exist. Moreover, we can use the linear form of  \eqref{eq:lam2_prime} to predict the values of $\chi$ at which $ \overline{\lambda}_2^{\prime}(1) $ and  $\overline{\lambda}_2^{\prime}(-1) $ change sign, allowing us to obtain a simplified criterion for this trait. That is, we solve for $\text{Re} \left( \overline{\lambda}_2^{\prime}(\pm 1) \right)=0$ to obtain x-intercepts for $\chi$ 
\begin{subequations}\label{eq:critical_chi}
\begin{align}
    \hat{\chi}(1) &=   \text{Re} \left( \frac{ {\mathbf{u}^{(1)}}^* \mathbf{L}^{(2)} \mathbf{v}^{(1)}} { {\mathbf{u}^{(1)}}^* (\mathbf{L}^{(1)} + \mathbf{L}^{(2)}) \mathbf{v}^{(1)}} \right), \\
    \hat{\chi}(-1) &= \text{Re} \left( \frac{ {\mathbf{u}^{(2)}}^* \mathbf{L}^{(2)} \mathbf{v}^{(2)}} 
        { {\mathbf{u}^{(2)}}^* (\mathbf{L}^{(1)} + \mathbf{L}^{(2)}) \mathbf{v}^{(2)}} \right) .
\end{align}
\end{subequations}
Considering the lines defined in \eqref{eq:lam2_prime}, when the slopes are positive and y-intercepts are negative (which we observe to be true for all our experiments, but in principle, it may not always be true), then the implications of Rolle's Theorem can be summarized in a simplified form: a cooperative optimum is guaranteed to exist when the the two layers have  sufficiently similar timescales in that
\begin{align}\label{eq:guar}
    \chi \in (\hat{\chi}(-1),\hat{\chi}(1)).
\end{align}
That is, $\chi$ is neither too large nor too small. Moreover, the criterion given by \eqref{eq:guar}  also guarantees that the convergence rate $\text{Re}(\overline{\lambda}_2(\delta)) $ of the multiplexed system is faster than that for either system,
\begin{align}\label{eq:gaa}
    \text{Re}(\overline{\lambda}_2(\delta)) >  \max\{\chi \text{Re}({\lambda}^{(1)}_2), (1-\chi)\text{Re}({\lambda}^{(1)}_2) \}.
\end{align}

\begin{figure}[!t]
	\centering
    	\includegraphics[width=\linewidth]{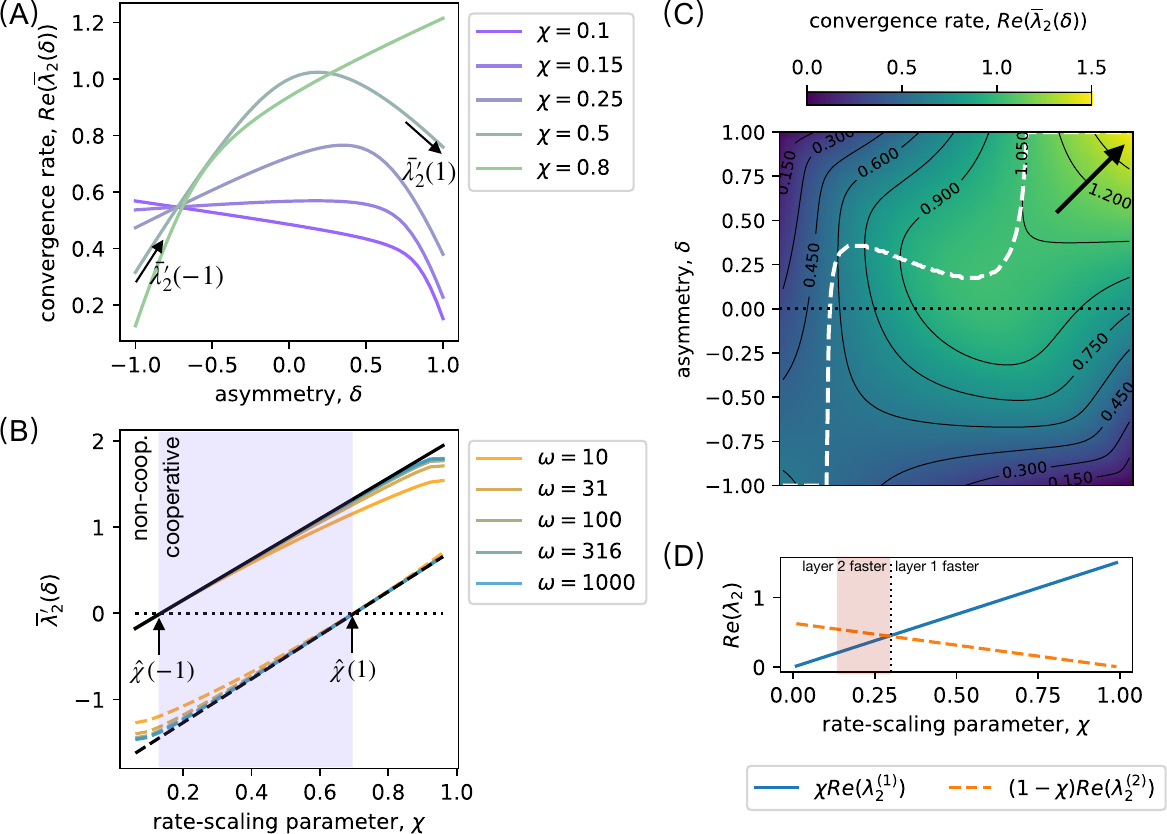}
	\caption{ 
	{\bf Combined effects of coupling asymmetry and relative timescale for the second system in Fig.~\ref{fig:behaviors_toy}(A).}
	(A) We plot the predicted convergence rate $\text{Re} \left(\overline{\lambda}_2(\delta)\right)$ given by \eqref{eq:lim_lam2}--\eqref{eq:L_bar_2}, which assume    large  coupling strength $\omega$, versus  $\delta$ for several choices of   $\chi$.   Black arrows highlight derivative $\overline{\lambda}_2'(\delta) \equiv\frac{d}{d\delta} \text{Re} \left(\overline{\lambda}_2(\delta) \right)$ for one curve at $\delta = \pm 1$.
	(B)~Black solid and dashed lines indicate our predicted derivatives   $\overline{\lambda}_2'(\pm1)$ given by \eqref{eq:lam2_prime}, and colored curves indicate observed values of $ \frac{d}{d\delta} \text{Re}({\lambda}_2)$ for several   choices of $\omega$. As indicated by the shaded region, Rolle's Theorem guarantees the existence of a cooperative optimum   when    $\chi\in(\hat{\chi}(-1),\hat{\chi}(1))$ [see  \eqref{eq:guar}], where $\hat{\chi}(\pm1)$ are given by  \eqref{eq:critical_chi} and are  the values of $\chi$ at which   $\overline{\lambda}_2'(\pm1)$ change sign.
	(C)~We use color to depict  $ \text{Re} \left(\overline{\lambda}_2(\delta) \right)$ across the  $(\delta,\chi)$ parameter space.  The dashed white curve shows the asymmetry $\hat{\delta}$ that maximizes $\text{Re} \left(\overline{\lambda}_2(\delta) \right)$ for each value of $\chi$. The arrow in the top-right corner highlights  the location of the overall optimum,  $\max_{\delta,\chi} \text{Re}\left(\overline{\lambda}_2(\delta)\right)$, which is a non-cooperative optimum at $ ( {\delta}, {\chi})  \approx(1,1)$.
    	(D)~We plot the  convergence rate for each separate layer as a function of $\chi$ and highlight a counter-intuitive phenomenon that occurs for the shaded range of $\chi$ (see text).
	}
	\label{fig:optima_Simple}
\end{figure}


\begin{figure*}[!t]
	\centering
	\includegraphics[width=1\linewidth]{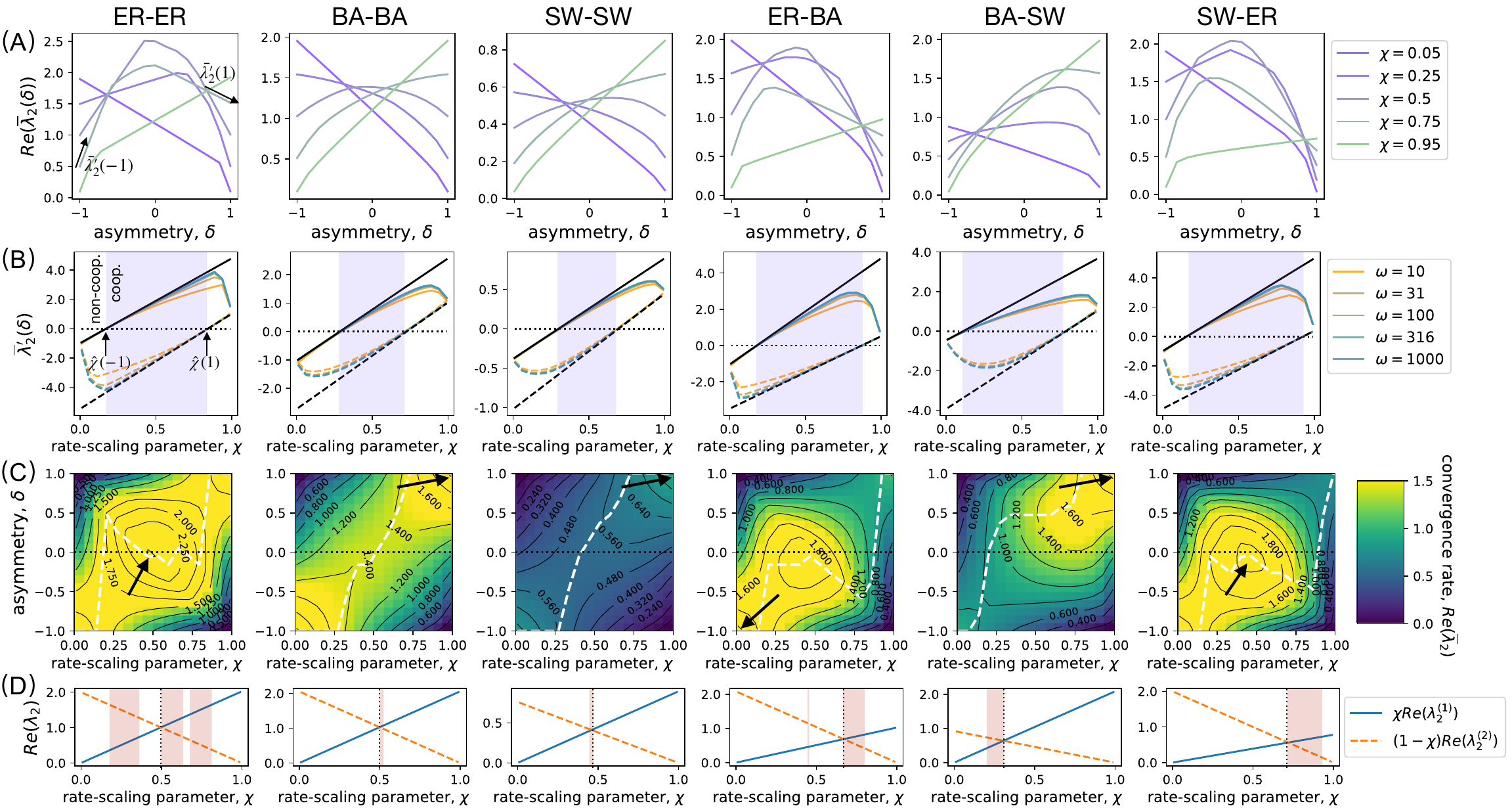}
	\caption{ 
	{\bf Combined effects of coupling asymmetry $\delta$ and relative timescale $\chi$ on the convergence rates for random multiplex networks.}
	Here, we extend the results shown in Fig.~\ref{fig:optima_Simple} to the six random multiplex networks   described in Sec.~\ref{sec:large}.
	The main difference is highlighted by the black arrows in panel (C), which denote the locations of the overall optimum: $\max_{\delta,\chi} \text{Re}\left(\overline{\lambda}_2(\delta)\right)$. 
	Observe that the overall optimum is   cooperative   for the ER-ER and SW-ER networks (i.e., see the leftmost and rightmost columns), 	whereas it is non-cooperative for the others. Additionally,   as indicated by the shaded regions   in (D), there are     several   counter-intuitive scenarios.  For these value of $\chi$, convergence occurs fastest when the  layer with slower dynamics more-strongly influences the faster layer.
	}
	\label{fig:optima_RG}
\end{figure*}

In Fig.~\ref{fig:optima_Simple}, we study the combined effects of the  asymmetry parameter $\delta$ and rate-scaling parameter $\chi$   on the convergence rate for the multiplex network that was visualized in the second column of Fig.~\ref{fig:behaviors_toy}(A). In Fig.~\ref{fig:optima_Simple}(A), we plot  $\text{Re} \left(\overline{\lambda}_2(\delta) \right)$ versus $\delta$ for several choices of $\chi$. Observe  for $\chi\in \{0.15,0.25,0.5\}$ that the optimum is cooperative, which agrees with our predicted range   given by \eqref{eq:guar}. This range is indicated by the shaded region in Fig.~\ref{fig:optima_Simple}(B), where  solid and dashed black lines  illustrate \eqref{eq:lam2_prime}, and their intersections with the x-axis yield the critical values of $\chi$ that are defined by \eqref{eq:critical_chi}. The colored lines  depict empirically observed values of $\frac{d}{d\delta} \text{Re}(\lambda_2)$ that were computed directly from $\mathbb{L}(\omega,\delta)$ with several choices of coupling strength  $\omega$. Interestingly, our theoretical predictions for $\hat{\chi}(\pm 1)$   appear to describe the x-axis intercepts for all of these $\omega$ values, which is surprising since our derivation has assumed  the limit large  $\omega$.

In Fig.~\ref{fig:optima_Simple}(C), we visualize by color the convergence rate $\text{Re}\left(\overline{\lambda}_2(\delta)\right)$ across the parameter space $\delta\in[-1,1]$ and $\chi\in[0,1]$. The dashed white  curve   indicates the optimal asymmetry $\hat{\delta} = \text{argmax}_\delta \text{Re}\left(\overline{\lambda}_2(\delta)\right)$ for each value of $\chi$. The   arrow in the top-right corner highlights   the location of the overall optimum, $ \text{max}_{\delta,\chi} \text{Re}\left(\overline{\lambda}_2(\delta)\right)$, which is a non-cooperative optimum at $(\delta,\chi)=(1,1)$. That is, convergence is optimally fast for this system when the first layer is made to be as fast as possible (${\chi}=1$), and it influences the second layer without a reciprocated influence (${\delta}=1$).

In Fig.~\ref{fig:optima_Simple}(D), we highlight a counter-intuitive property: when  two layers are coupled with an optimal asymmetry parameter $\hat{\delta}$, it may involve the slower system having a greater influence over the  faster system. To make this point, the solid blue  and dashed orange lines in Fig.~\ref{fig:optima_Simple}(D) indicate the separate convergence rate of each   (uncoupled) layer. The vertical dotted line in Fig.~\ref{fig:optima_Simple}(D) highlights that layer 1 is slower than layer 2 when $\chi <0.29$. Moreover, observe in Fig.~\ref{fig:optima_Simple}(C) that $\hat{\delta}>0$ (see dashed white  curve) for approximately $\chi>0.13$. Therefore, the shaded region  $\chi\in(0.13,0.29)$  in Fig.~\ref{fig:optima_Simple}(D) highlights values of $\chi$ in which the optimal asymmetry correspond to when  the slower system (layer 1)   more strongly influences the faster one (layer 2). That said,  our experiments also suggest that if $\chi$ is sufficiently large or small (i.e., of one layer is much, much faster than the other), then the optimal asymmetry does coincide with our intuition:  the faster system should more strongly influence the slower one.

In Fig.~\ref{fig:optima_RG}, we repeat the  experiment  shown in Fig.~\ref{fig:optima_Simple}, except we now study the six random multiplex networks that were introduced in Sec.~\ref{sec:large}.
In Fig.~\ref{fig:optima_RG}(A), we plot $\overline{\lambda}_2(\delta)$ versus $\delta$ for several choices of $\chi$, and observe for all networks   that the optimum is cooperative when $\chi$ is neither too large nor too small. Specifically, this occurs when  $\chi\in \{0.25,0.5,0.75\}$ for the multiplex networks labeled ER-ER, ER-BA, BA-SW and SW-ER, and when $\chi = 0.5 $ for the BA-BA and SW-SW networks. The remaining curves depict  a non-cooperative optimum (i.e., layer dominance).
These values of $\chi$ are in excellent agreement  with our predicted ranges of $\chi$ that yield a cooperative optimum, which are given by \eqref{eq:guar} and are   indicated by the shaded regions in Fig.~\ref{fig:optima_RG}(B). As before, the  solid and dashed black lines   illustrate \eqref{eq:lam2_prime}. 

In Fig.~\ref{fig:optima_RG}(C), we use color to visualize     $\text{Re}\left(\overline{\lambda}_2(\delta)\right)$ across the parameter space $\delta\in[-1,1]$ and $\chi\in[0,1]$. In each column, dashed white curves highlight the optimum $\hat{\delta} = \text{argmax}_\delta \text{Re}\left(\overline{\lambda}_2(\delta)\right)$ for a fixed, given value of $\chi$, whereas the arrows highlight the   overall optimum, $ \text{max}_{\delta,\chi} \text{Re}\left(\overline{\lambda}_2(\delta)\right)$. 
Observe that the overall optimums are cooperative for the models ER-ER and SW-ER, which occur, respectively, near $ ( {\delta}, {\chi})  \approx(-0.1, 0.48)$ and $( {\delta}, {\chi})  \approx(-0.03, 0.48)$.
Intuitively,  consensus is maximally accelerated in both models when the second layer  is slightly more influential than the first (i.e., since $ {\delta}<0$), and the first layers' dynamics are slightly slowed down (i.e., since $ {\chi}<0.5$).
On the other hand, the overall optimum is non-cooperative for the models BA-BA, SW-SW and BA-SW---that is, convergence is optimally fast for these systems when the first layer is made to be as fast as possible ($\chi\to1$) and that layer influences the second without reciprocation ($\delta\to1$). 
The model   ER-BA also exhibits a non-cooperative overall optimum; however in this case the optimum involves the second layer dominating the first: $ ( {\delta}, {\chi}) \to (-1,0) $.
%

Finally, lines in Fig.~\ref{fig:optima_RG}(D) illustrate each  layers' separate convergence rate, and the shaded regions highlight values of $\chi$  in which the dynamics are  counter intuitive. For these values of $\chi$, convergence is fastest when the layer with slower dynamics more-strongly influences the faster layer.

\section{Application: Modeling the  Collective Decisions  of   Human-AI  Systems}\label{sec:AI}

We propose the interconnected consensus system presented in Sec.~\ref{sec:model} as an insightful model for studying the collective decisions of   human-AI   teams. In Sec.~\ref{sec:motiv}, we motivate and interpret our model for this application.
In Sec.~\ref{sec:time_balance}, we study the network presented in Fig.~\ref{fig1}(A), showing that  a cooperative optimum requires     human-human  and  AI-AI interactions to have  similar timescales.
In Sec.~\ref{sec:overall}, we study the overall optimum that maximizes the convergences rate by simultaneously tuning layers' timescales and  the asymmetry of coupling between the human layer  and the AI layer.

\begin{figure*}[!ht]
    \centering 
    \includegraphics[width=.99\linewidth]{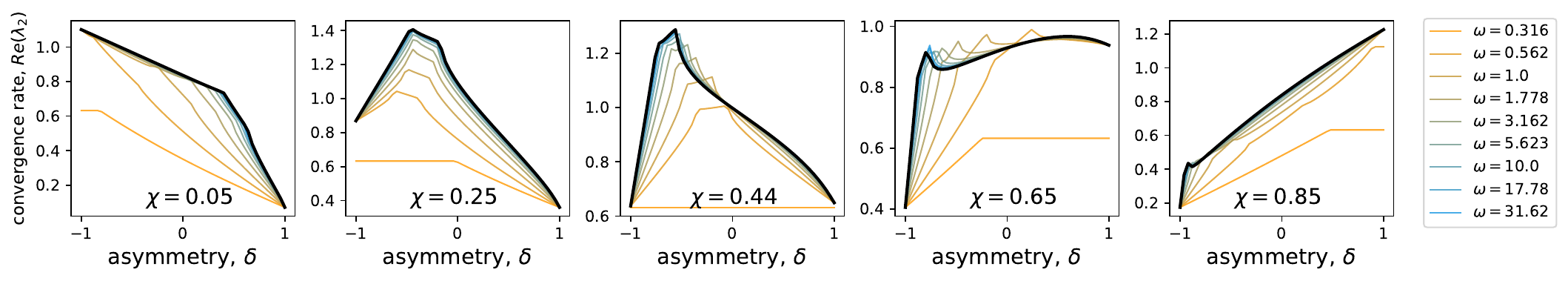}
    \caption{
     {\bf A cooperative optimum requires the human and AI network layers   have   sufficiently similar timescales.}
    For the system   visualized in Fig.~\ref{fig1}(A), we plot the convergence rate $\text{Re}(\lambda_2)$ versus $\delta$.  Different columns reflect different choices for the  rate-scaling parameter $\chi\in(0,1)$, which   tunes whether layer 1 is much faster ($\chi\approx 1$) or layer 2 is much faster ($\chi\approx 0$). 
    Colored curves yield results for different $\omega$ (see legend), and black curves depict our theoretical prediction for large $\omega$ that is given by \eqref{eq:lim_lam2}--\eqref{eq:L_bar_2}.
By comparing across the columns, observe that their optima are cooperative for intermediate values of $\chi$ (i.e., the second, third and fourth columns) and  non-cooperative when $\chi$ is either too small or large (see left-most and right-most columns).
    }
    \label{fig:timescales_HumanAI}
\end{figure*}

\subsection{Motivation and Interpretation}\label{sec:motiv}

Within the social and cognitive sciences, there is a rich literature of dynamical models for collective decision making by social groups and social networks, many of which describe processes by which populations can reach consensus \cite{hinsz1990cognitive,fiol1994consensus,flood2000chief,mohammed2001toward}.
%
Of particular interest is the Abelson model \cite{abelson1964mathematical,abelson1967mathematical} for opinion dynamics, which takes the form of a linear differential equation similar to our proposed model \eqref{eq:con}.
Similar consensus models have also been utilized in the biology community to model decision making by animal groups \cite{conradt2005consensus,westley2018collective} and by the computer science community  to implement decentralized algorithms for machine learning and AI \cite{bijral2017data,assran2019stochastic,niwa2020edge,vogels2020powergossip,kong2021consensus,huynh2021}.
We offer one interpretation of the latter application---that is, a set of ML/AI models are trained on different data to have different parameter values, and they reach a ``collective decision'' on the best model parameters using a similar model for consensus.

Given the ubiquity of consensus models for  collective decision making in a wide variety of applications, we propose the interconnected consensus system   in Sec.~\ref{sec:model}  as a simple-yet-informative  model for   collective decisions made by a social network in which individual are supported by AI agents, who themselves coordinate and collectively learn.
In this context,  the asymmetry parameter has the following interpretation:  $\delta>1$ implies that the humans' states more strongly influence those of the AI agents, and $\delta<1$ implies the opposite. Moreover, the rate-scaling parameter $\chi$ controls the relative timescale for coordination via human-human interactions as compared to AI-AI interactions.  Collective consensus-based decisions  within the social network is  represented by the intralayer consensus model $\frac{d}{d\tau} {\bf x}^{(1)}(\tau) = - \chi \mathbf{L}^{(1)}  {\bf x}^{(1)} (\tau)$, while consensus among AI agents  is represented by $\frac{d}{d\tau} {\bf x}^{(2)}(\tau) = - (1-\chi )\mathbf{L}^{(2)}  {\bf x}^{(2)} (\tau)$. The choice $\chi\approx 0$ corresponds to when the AI agents coordinate much faster than the humans, while $\chi\approx 1$ implies the opposite. 

The two consensus systems are coupled according to \eqref{eq:con} so that consensus over the entire system represents a  decision that is collectively obtained over the `multiplexed' human-AI social system. Such a model could interpreted as a binary model so that  a state $x_p\in \mathbb{R}$ represents the tendency of a human or agent to make some particular binary decision, such as   taking a strategic military action \cite{rasch2003incorporating,azar2021drone,zhou2019bayesian} or investing in a particular stock  \cite{albadvi2007decision,chou1997stock,yu2005designing}. 
Having a strong preference for (or against) such an action would be represented by a large positive (or negative) value, and weaker preferences can be represented by small-magnitude values. We interpret a collective decision of yea or nay  as the converged state being positive or negative, and it can be beneficial for systems to make optimally fast decisions (which can be engineered by maximizing the convergence rate). That said, in real-world scenarios one should also consider other system properties that are essential including, e.g., trust \cite{ezer2019trust} and coordinating agents'   expertise to be complementary \cite{bansal2019beyond}.

We now further study the multiplex network shown in Fig.~\ref{fig1}(A), where layer 1 is an empirical social network that encodes mentoring relationships among corporate executives \cite{krackhardt1987cognitive}, and as such, our system  models a collective business decision in which each executive boardroom member has the unique support of a personalized AI agent. Layer 2 is created as a random directed graph. 
%
We will show for this system that the existence of a cooperative versus non-cooperative optimum  depends crucially on the relative   timescales of two coupled consensus systems. We again focus on   the case of  $T=2$ layers with Laplacians given by $  \chi \mathbf{L}^{(1)}$ and $ (1-\chi) \mathbf{L}^{(2)} $, where  \emph{rate-scaling parameter} $\chi\in(0,1)$ tunes the relative convergence rate for each layer. 
We insert  these weighted Laplacians into \eqref{eq:L_bar_2} to obtain 
$\overline{\mathbf{L}}(\delta) = \frac{1 + \delta}{2}  \chi\mathbf{L}^{(1)} + \frac{1 - \delta}{2}  (1-\chi)\mathbf{L}^{(2)},$ 
and then study how a system's behavior (i)--(v) depends on both $\chi$ and $\delta$.
Note that the introduction of $\chi$ changes the layers' separate convergence rates to be $\chi \text{Re}(\lambda_2^{(1)})$ for layer 1 and $(1-\chi) \text{Re}(\lambda_2^{(2)})$ for layer 2.

\subsection{A Cooperative Optimum Requires that Humans and AI-Agents Coordinate on  Similar Timescales}\label{sec:time_balance}

We now examine the influence of layers’ relative timescales on the convergence rate for the human-AI system shown in Fig.~\ref{fig1}(A). In Fig.~\ref{fig:timescales_HumanAI}, we present results for an experiment that is similar to the results shown in Fig.~\ref{fig:timescales_Simple}.
We plot the convergence rate $\text{Re}(\lambda_2)$ versus $\delta$ for the human-AI system, and different columns reflect different choices for $\chi$. In each panel, different curves reflect different choices for $\omega$, and black curves indicate our analytical prediction for large $\omega$. By comparing across the columns, observe that their optima are cooperative for intermediate values of $\chi$ (i.e., the second, third and fourth columns) and  non-cooperative when $\chi$ is either too small or large (e.g., see left-most and right-most columns).  That is, a cooperative optimum requires that the humans and AI-agents coordinate on similar timescales. Otherwise, we find that the convergence is fastest when the faster layer influences the slower one without a reciprocated influence, i.e., layer dominance.

\subsection{Cooperative Optimum Yields Fastest Convergence}\label{sec:overall}

\begin{figure}[!t]
    	\centering
    	\includegraphics[width=\linewidth]{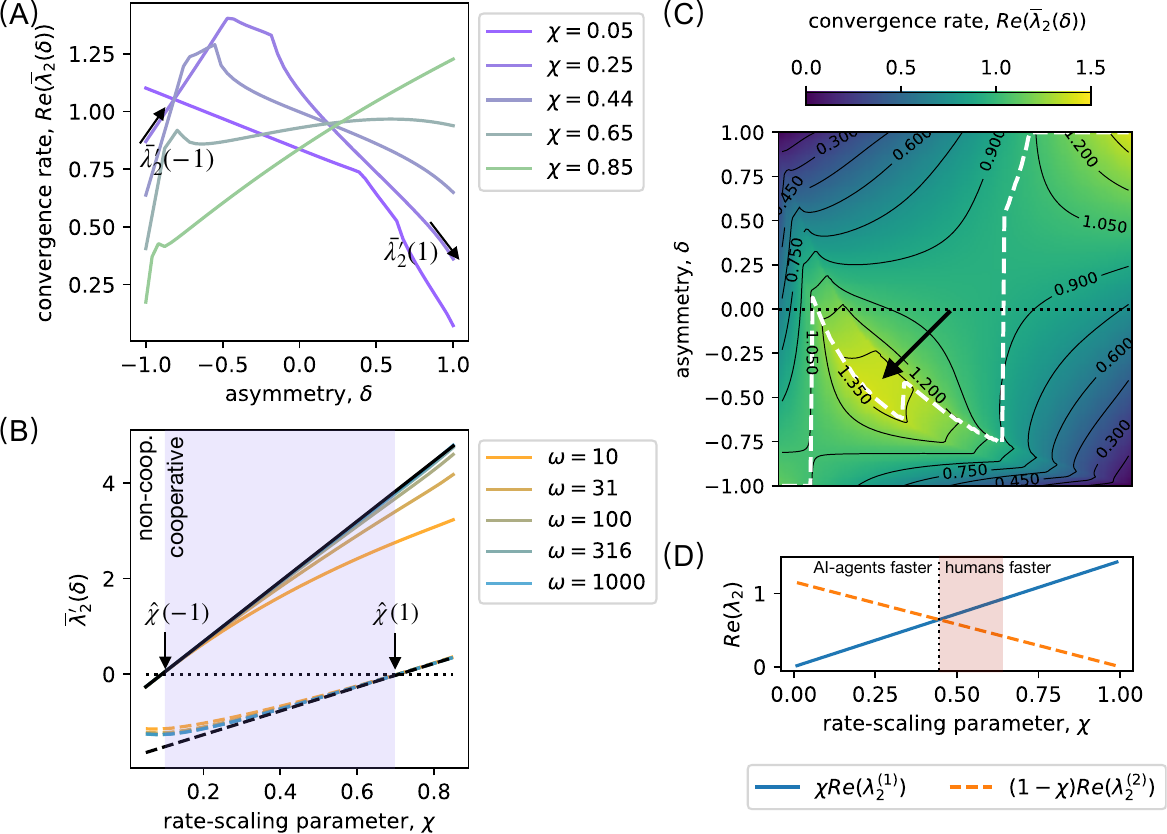}
    	\caption{ {\bf Combined effects of coupling asymmetry and relative timescale on a Human-AI consensus system}.
	We present identical information as in Fig.~\ref{fig:optima_Simple}
 but for the Human-AI network that was shown in Fig.~\ref{fig1}(A).
	The main difference is highlighted by the arrow in panel (C): the overall optimum $\max_{\delta,\chi} \text{Re}\left(\overline{\lambda}_2(\delta)\right)$ is now   a cooperative optimum that occurs at approximately $ ( {\delta}, {\chi})  \approx(-0.5,0.25)$.
	That is, if one allows coordination within the social-network layer to be faster than that of the AI-agents, or vice versa, and one also allows for asymmetric coupling between these two consensus systems,
	 then for this particular multiplex network, the overall system converges optimally fast   when the AI layer   coordinates  slightly faster than, and has a slightly stronger influence over, the social-network layer.}
    	\label{fig:optima_HumanAI}
\end{figure}

In Fig.~\ref{fig:optima_HumanAI}, we present results  that are identical to those that were presented in Fig.~\ref{fig:optima_Simple}, except we now examine the convergence rate for the human-AI system shown in Fig.~\ref{fig1}(A). In Fig.~\ref{fig:optima_HumanAI}(A), we plot $\text{Re}\left(\overline{\lambda}_2(\delta)\right)$ versus $\delta$ for several choices of $\chi$. 
Observe that the optimum is cooperative for $\chi \in\{ 0.05,0.85\}$  and  non-cooperative for $\chi\in \{0.25,0.45,0.65\}$. This is in agreement with our theory, which is shown Fig.~\ref{fig:optima_HumanAI}(B). Specifically, we find this system to exhibit a cooperative optimum for the approximate range $\chi \in(0.1,0.7)$ and a non-cooperative optimum outside this range. That is, the rates of AI-AI  coordination/communicate should be sufficiently similar to that for human-human coordination/communicate, otherwise the optimally coupled system will be non-cooperative---i.e., convergence will be optimally fast when one network layer influences the other without feedback. 

In Fig.~\ref{fig:optima_HumanAI}(C), we plot $\text{Re}\left(\overline{\lambda}_2(\delta)\right)$ across the parameter space $\delta\in[-1,1]$ and $\chi\in[0,1]$. The dashed white  curve in  indicates the optimum $\hat{\delta} = \text{argmax}_\delta \text{Re}(\overline{\lambda}_2)$ for each value of $\chi$. The black arrow highlights that the overall optimum is cooperative and occurs  at the approximate location    $( {\delta}, {\chi})  \approx(-0.5,0.25)$. Intuitively, consensus is maximally accelerated  when the AI-agents   are slightly more influential (i.e., since $ {\delta}<0$), and collective dynamics among humans are slightly slowed down (i.e., since $ {\chi}<0.5$).

Finally, in Fig.~\ref{fig:optima_HumanAI}(D),  we highlight that an unintuitive property also occurs for this humam-AI system.
Specifically, 
the shaded area highlights for the approximate range $\chi\in(0.45,0.65)$ that convergence is fastest if the slower layer (the AI-agents layer) is actually more influential than the faster layer (the human layer). 

Before concluding, we emphasize that these specific findings  are a result of the particular network layers that we study (i.e., layer 1 is an empirical  social network \cite{krackhardt1987cognitive} and we generate layer 2 as a random directed graph). Thus, these network-specific findings describe the behavior of this specific  model and should not be simply extrapolated to real-world  decision systems. In general, the optimal coupling of a human-AI decisions system will greatly differ from one application to another depending on the specific details of each system, which includes the layers' unique network topologies, their respective dynamics, and one's design goals for that systems. In principle, one should also consider other design factors beyond convergence rate \cite{ezer2019trust,bansal2019beyond}. Nevertheless, our theory provides a baseline of understanding for the effects on convergence rate   for human-AI decision systems in which these two network types (a social network and coordinating AI agents) are asymmetrically coupled and have different timescales for collective coordination.

\section{Discussion}\label{sec:Disc}

By now, the scientific literature on multiplex network dynamics is well established \cite{cozzo2018multiplex}. However, most   theory focuses on network layers that are symmetrically coupled, and  there remains a lack of understanding of how the asymmetric coupling of layers can affect dynamics and also provide new strategies for optimization. This is troubling since real-world networks are often interconnected with one layer being more influential than another. Thus motivated, we formulated a model for multiplex networks in which  an asymmetry parameter $\delta$ can tune the extent to which interlayer influences are biased in a particular direction. Although our work is primarily motivated by modeling collective behavior over a human-AI   system, our formulation of a supraLaplacian $\mathbb{L}(\omega,\delta)$  in \eqref{eq:supra} using an asymmetry parameter $\delta\in[-1,1]$ can support the broader study of how coupling asymmetry can affect any Laplacian-related dynamics (i.e., diffusion, synchronization, and so on).

Here, we have focused on the impact of coupling asymmetry on  the convergence rate $\text{Re}(\lambda_2)$ toward a collective state, which is an important property that is often optimized for engineered systems \cite{assran2019stochastic,niwa2020edge,vogels2020powergossip,kong2021consensus,huynh2021}. 
We provided an initial observation and categorization revealing five distinct ways [see (i)--(v) in Sec.~\ref{sec:asaaq}] in which coupling asymmetry has a nonlinear effect on  $\text{Re}(\lambda_2)$. Moreover, it is insightful to consider systems that are optimally coupled and ask whether their coupling is \emph{cooperative}, in that they mutually influence each other, or \emph{non-cooperative}, in that one system directs another without reciprocated feedback. It's worth highlighting that the situation of  non-cooperative coupling between network layers closely relates to prior research for collective dynamics over `master-slave' systems \cite{haken1977synergetics,suykens1997master,ramirez2018enhancing}. These similarities contain subsystems in which one influences another without reciprocated feedback. Our work extends their study  to the setting of optimized multiplex networks.

We find a non-cooperative configuration to be optimal when one system is much faster than the other, whereas a cooperative coupling is optimal if the layers have sufficiently similar dynamics timescales. In fact, we obtained a theoretical criterion (see Sec.~\ref{sec:exist}) for when the fastest convergence is   {cooperative} versus non-{cooperative}.  This result, in addition to the system properties (i)--(v) and the structural/dynamical factors influencing (non)-cooperation for optimal systems, should be considered as a stepping stone for further research  on the optimization of interconnected systems using techniques that jointly consider coupling asymmetry and timescale tuning. It would be interesting to explore whether our findings/methods are also predictive for more-complicated dynamics including, e.g., the optimization of synchronized chaotic systems \cite{pecora1998master,sun2009master,skardal2017optimal} 
and empirical human-AI systems.

See \cite{zhao_code} for a codebase that models interconnected consensus systems and reproduces our findings.


%


\ifCLASSOPTIONcompsoc
  \section*{Acknowledgments}
\else
  \section*{Acknowledgment}
\fi

The authors would like to thank Sarah Muldoon, Naoki Masuda, and Malbor Asllani for helpful feedback.

\ifCLASSOPTIONcaptionsoff
  \newpage
\fi



\bibliographystyle{IEEEtran}
\bibliography{bibliography}
%
%
%


\begin{IEEEbiography}[{\includegraphics[width=1in,height=1.25in,clip,keepaspectratio]{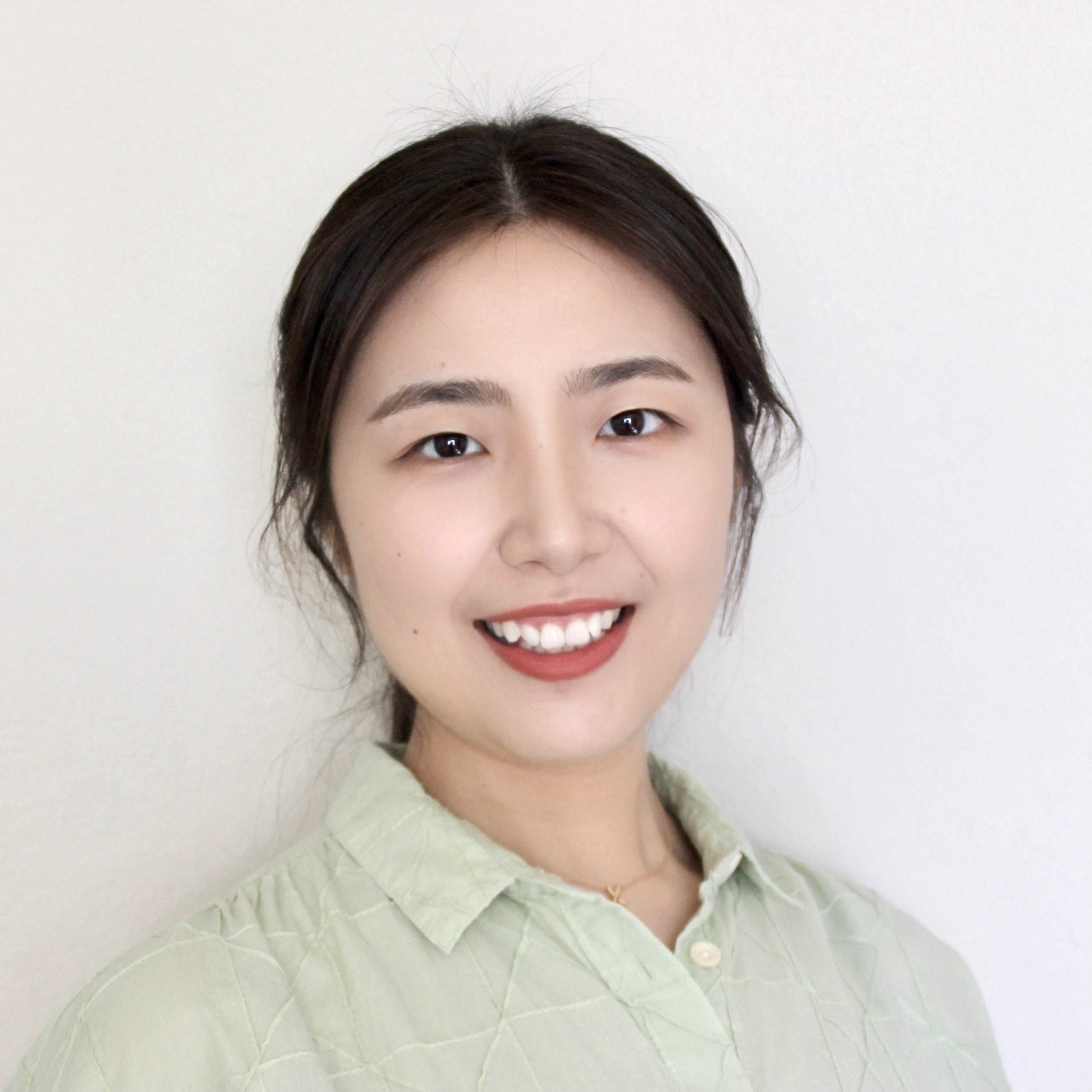}}]{Zhao Song}
received a B.S. degree in Information and Computing Science from Henan University of Technology, China, in 2015. and the M.A. and Ph.D. degrees in Mathematics from the University at Buffalo, the State University of New York, USA, in 2018 and 2021. She is currently working as a research associate at the Department of Mathematics, Dartmouth College, USA. Her research focuses on multiplex networks and network dynamics in network science.
\end{IEEEbiography}

\begin{IEEEbiography}[{\includegraphics[width=1in,height=1.25in,clip,keepaspectratio]{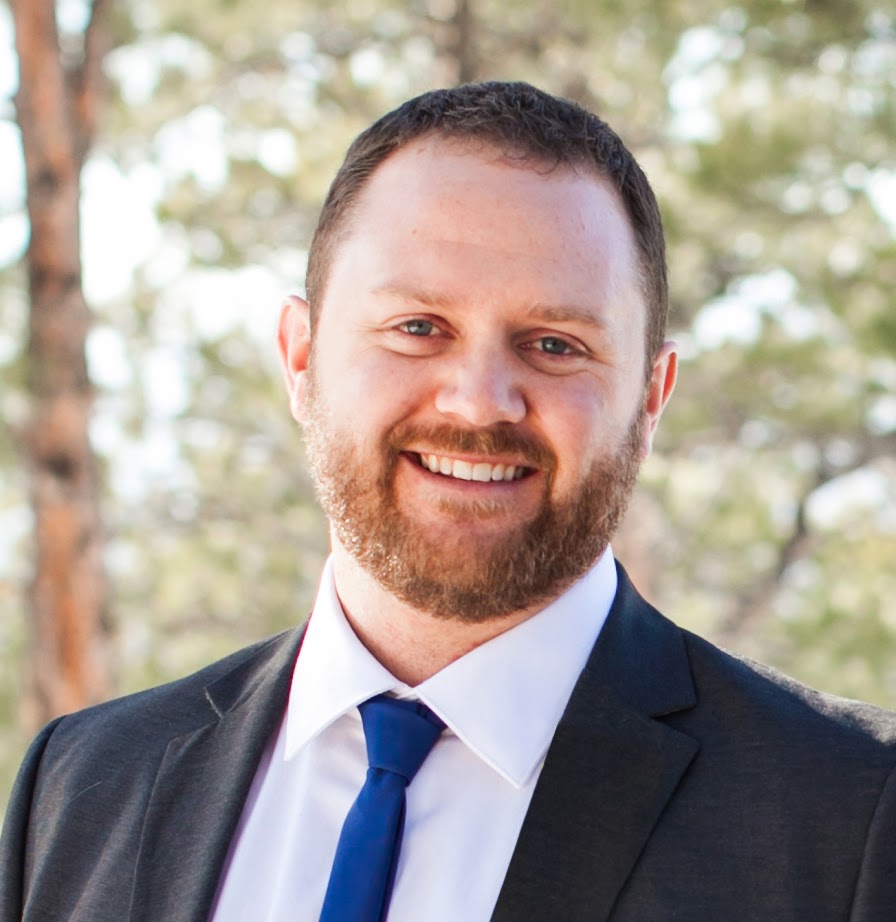}}]{Dane Taylor}
received a PhD in applied mathematics from the University of Colorado, Boulder in 2013. Before joining the University of Wyoming in 2023, Dr. Taylor held   positions at the Statistical and Applied Mathematical Sciences Institute (2013-2015), the University of North Carolina at Chapel Hill (2015-2017), and the University at Buffalo, SUNY (2017-2023).
Dr. Taylor has published over 35 papers on   network  modeling for data science and dynamical systems and has been the main organizer for events including the Northeast Regional Conference on Complex Systems (NERCCS) and the SIAM Workshop on Network Science.
\end{IEEEbiography}

%


%


\vfill



%

\newpage

\appendices
\renewcommand{\theequation}{\thesection.\arabic{equation}}
\setcounter{page}{1}

\section{Further Study of Nonrobust Optima}\label{sec:nonr}

\begin{figure}[!b]
	\centering
    \includegraphics[width=.9\linewidth]{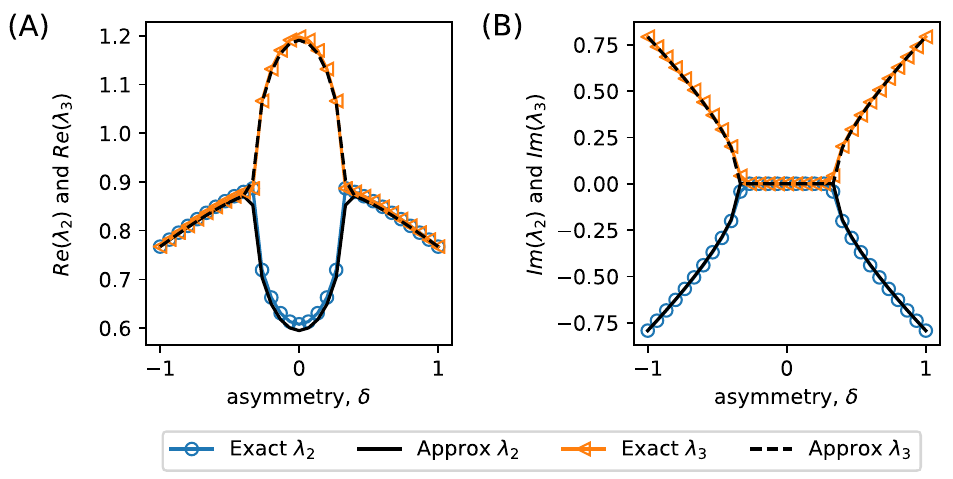}
	\caption{{\bf A collision between eigenvalues $\lambda_2$ and $\lambda_3$ yields a spectral bifurcation and nonrobust optima.}
	We plot (A) the real parts and (B) imaginary parts of $\lambda_2$ and $\lambda_3$ versus $\delta$ for the interconnected consensus systems shown in the center column of Fig.~2(A). Symbols depict    observed eigenvalues that are directly computed using $\mathbb{L}(\omega,\delta)$ with $\omega=30$, whereas   black curves are theoretical predictions.
	The nonrobust optima at $\delta=\pm 0.35$ coincide with a spectral bifurcation in which $\lambda_2$ and $\lambda_3$ collide to yield a complex pair of eigenvalues.
	}
	\label{fig:bifurcation}
\end{figure}

Here, we further investigate $\text{Re}(\lambda_2)$ for the multiplex network shown in  the center column of Fig.~2(A). In the center column of Fig.~2(B), we plotted $\text{Re}(\lambda_2)$ versus $\delta$ for these interconnected consensus systems and observed two optima near $\delta=\pm0.35$. We  classify these as ``nonrobust'' optima, because the derivative $\frac{d}{d\delta} \text{Re}(\lambda_2)$ is undefined (i.e., discontinuous) at these optima. 

Here, we show that these two nonrobust optima (and the lack of differentiability) arise due to a spectral bifurcation  in which the two eigenvalues  $\lambda_2$ and $\lambda_3$ of $\mathbb{L}(\omega,\delta)$ collide and give rise to a complex pair of eigenvalues. Here, we have defined $\lambda_2$ and  $\lambda_3$ as the eigenvalues of $\mathbb{L}(\omega,\delta)$  with second-smallest and third-smallest real part, respectively.

In Fig.~\ref{fig:bifurcation}, we plot $\lambda_2$ and $\lambda_3$ for a supraLaplacian $\mathbb{L}(\omega,\delta)$ associated with with $\omega=30$ and various $\delta$ for the system shown in the center column of Fig.~2(A). In panels (A) and (B) of Fig.~\ref{fig:bifurcation}, we depict the real and imaginary parts, respectively, of these eigenvalues. Symbols indicate observed values that are   directly computed  using $\mathbb{L}(\omega,\delta)$, whereas  solid and dashed black curves  indicate  theoretical predictions that we will present in the next section.
Note that the observed and predicted eigenvalues are in excellent agreement.

Observe in Fig.~\ref{fig:bifurcation}(A) that the theoretical curve for $\lambda_2$ is identical to the one  that is in the center column of Fig.~2(B). The curve has two optima near $\delta\pm 0.35$, and these values coincide with spectral bifurcations in which $\lambda_2$ and $\lambda_3$ change from being distinct real-valued eigenvalues to being a complex pair of eigenvalues (or vice versa). That is, for $|\delta|\le0.35$, the eigenvalues $\lambda_2$ and $\lambda_3$ are purely real (i.e., $\text{Im}(\lambda_2)=\text{Im}(\lambda_3)=0$ and $\text{Re}(\lambda_3)>\text{Re}(\lambda_2)$. In contrast,  for $|\delta|>0.35$, the eigenvalues $\lambda_2$ and $\lambda_3$ are complex numbers and $\text{Re}(\lambda_3)=\text{Re}(\lambda_2)$. Thus, the nonrobust optima at $\delta =\pm 0.35$ arise for these interconnected consensus systems because of a spectral bifurcation.

\begin{figure}[!b]
    	\centering
    	\includegraphics[width=\linewidth]{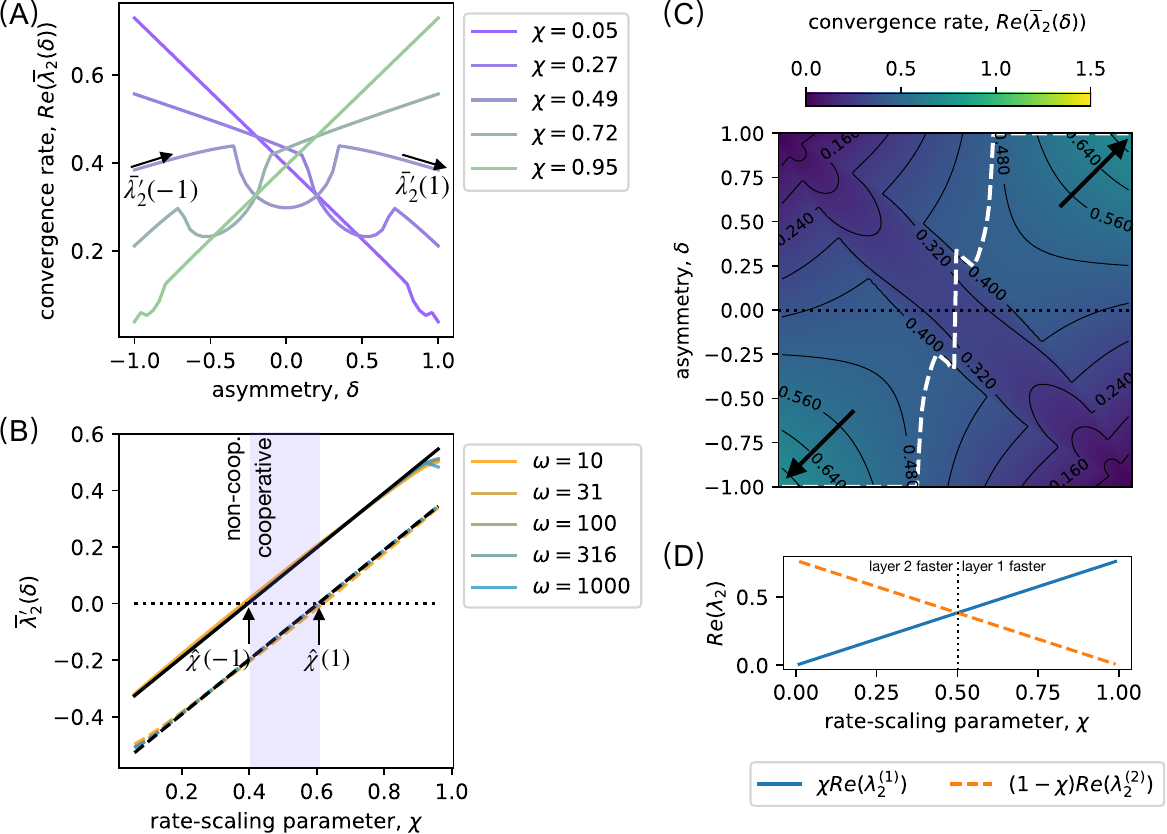}
    	\caption{ 
	{\bf Combined effects of coupling asymmetry and  relative timescale on the third system in Fig.~\ref{fig:behaviors_toy}(A).}
	 We present identical information as in Figs. 4 and 5 but for the multiplex network that was visualized in the third column of Fig.~\ref{fig:behaviors_toy}. The main difference here can be seen in panels (C) and (D). The two arrows in (C) highlight that there are now two overall optimal choices for $\delta$ and $\chi$ (both are non-cooperative), and the lack of a shaded region in (D) highlights that there is no counter-intuitive phenomenon whereby it is optimal for the slower system to more strongly influence the faster one.}
    	\label{fig:optima_RingGraph7}

\end{figure}

In Fig.~\ref{fig:optima_RingGraph7}, we present a information that is identical to what we presented in   Figs.~\ref{fig:optima_Simple}
 and \ref{fig:optima_HumanAI}
, except that we now consider the multiplex network shown in the center column of Fig.~2(A). In Fig.~\ref{fig:optima_RingGraph7}
(A), we plot $\text{Re}\left(\overline{\lambda}_2(\delta)\right)$ versus $\delta$ for several choices of $\chi$. 
Observe for $\chi = 0.49$ that there exist two nonrobust optima because of spectral bifurcations, which we explained in Fig.~\ref{fig:bifurcation}.
This is the only value of $\chi$ that we examine for which our theory guarantees  cooperative optimum, and in fact, a non-cooperative optimum can be observed for the other values $\chi\in\{0.05,0.27,0.72,0.95\}$.
Specifically,  we depict our theory from Section~\ref{sec:exist} in Fig.~\ref{fig:optima_RingGraph7}
(B), and the shaded  $\chi\in(0.39,0.6)$ indicates the values of $\chi$ for which a cooperative optimum is guaranteed. 

In Fig.~\ref{fig:optima_RingGraph7}
(C), we visualize $\text{Re}\left(\overline{\lambda}_2(\delta)\right)$ for the parameter space $\delta\in[-1,1]$ and $\chi\in[0,1]$. As before, the dashed white  curve   indicates the optimum $\hat{\delta} = \text{argmax}_\delta \text{Re}(\overline{\lambda}_2)$ asymmetry for each value of $\chi$. The black arrow highlights that the overall optimum occurs approximately when either $( {\delta}, {\chi})  =  (-1,0)$ or  $( {\delta}, {\chi})  =  (1,1)$. That is, the overall optimum only occurs when one layer is set to be as fast as possible and it non-cooperatively influences the other layer. Due to symmetry, it doesn't matter which layer is chosen to be the dominator.

In Fig.~\ref{fig:optima_RingGraph7}
(D),   solid blue and dashed orange lines in  indicate the separate convergence rate of each (uncoupled) system:  $\chi \text{Re}(\lambda_2^{(1)})$ for system 1 and $(1-\chi) \text{Re}(\lambda_2^{(2)})$ for system 2.  Note for   $\chi >0.5$ that layer 1 is the faster system, and vice versa for $\chi<0.5$. Compare this to the dashed white  curve in Fig.~\ref{fig:optima_RingGraph7}
(C), where one can observe that $\hat{\delta}<0$ for $\chi< 0.5$ and $\hat{\delta}>0$ for $\chi> 0.5$. Therefore, for this system, it is always optimal for the faster layer to more strongly influence the slower layer. See Figs.~\ref{fig:optima_Simple}
(D) and \ref{fig:optima_HumanAI}
(D) for counter-intuitive situations in which the convergence is fastest  when the slower system more-strongly influences the faster one.

\section{Derivation of Equations~\eqref{eq:lim_lam2}--\eqref{eq:L_bar_2}}\label{sec:appendix_A}

\setcounter{equation}{0} 
Here, we provide our derivation of \eqref{eq:lim_lam2} and \eqref{eq:L_bar}, which predict the eigenvalue $\lambda_2(\omega, \delta)$ of $\mathbb{L}(\omega, \delta)$ in the limit of large $\omega$. We do not have $\omega$ and $\delta$ in the main paper, but will keep these parameters in the appendix. We will use perturbation theory for directed multiplex networks that is similar to that which was developed in  \cite{taylor2019tunable,taylor2020multiplex}.

First, we introduce a change of variables   $\epsilon = 1/\omega$ and  multiply   both sides of \eqref{eq:supra} 
by $\epsilon$ to obtain
\begin{equation} \label{eq:supra_tilde}
    \mathbb{\tilde{L}}(\epsilon, \delta) = \epsilon\mathbb{L}(\epsilon^{-1}, \delta) = \epsilon\mathbb{L}^{ \mathrm{I}} + \mathbb{L}^{\mathrm{I}}(\delta)\,  .
\end{equation}
Because we've only scaled the matrix $\mathbb{L}(\omega, \delta)$ by $\epsilon$, it follows that $\tilde{\lambda}_2(\epsilon, \delta)=\epsilon {\lambda}_2$ is an eigenvalue of $\mathbb{\tilde{L}}(\epsilon, \delta)$. Let $\mathbbm{\ {u}}(\epsilon, \delta)$ and $\mathbbm{\ {v}}(\epsilon, \delta)$ be the associated left and right eigenvectors of  $\mathbb{\tilde{L}}(\epsilon, \delta)$.
Note that   scalar multiplication does not change the eigenvectors of a matrix, and so $\mathbbm{\ {u}}(\epsilon, \delta)$ and $\mathbbm{\ {v}}(\epsilon, \delta)$ are also the eigenvectors of $\mathbb{L}(\omega, \delta)$ that are associated with $\lambda_2$.
The main motivation for this transformation is that we can more easily study
${\lambda}_2$ in $\omega\to\infty$ limit by instead studying $\tilde{\lambda}_2(\epsilon, \delta)$ as $\epsilon\to0^+$.
%
In this limit,
\begin{equation}
\mathbb{\tilde{L}}(\epsilon, \delta) \rightarrow \mathbb{L}^{ \mathrm{I}}(\delta) = \bm{ L}^{\mathrm{I}}(\delta) \otimes \mathrm{I} \, .
\end{equation}

To proceed, we   first establish some properties about the eigenvalues and eigenvectors of $\bm{L}^{\mathrm{I}}(\delta) \otimes \mathrm{I}$.

\begin{lemma}\label{thm:thm1}
Let $\mu$ be an eigenvalue of $\bm{L}^{\mathrm{I}}(\delta) \in\mathbb{R}^{T\times T}$ and  $\bm{u}$ and $\bm{v}$, respectively,  be its associated left and right eigenvectors. Furthermore, let $\mathbf{e}^{(i)} \in\mathbb{R}^N $ denote the $i$-th unit vector such that all entries are zeros, expect for entry $i$, which is a one.  It then follows that $\mu$ is  an eigenvalue of $\bm{ L}^{\mathrm{I}}(\delta) \otimes \mathrm{I}$, and it has associated left and right eigenvectors given by
\begin{align} \label{eq:u^it}
    \tilde{\mathbbm{u}}^{(i)} = \mathbb{P} \left(\mathbf{e}^{(i)}\otimes \bm{u}\right), ~~~
    \tilde{\mathbbm{v}}^{(i)} = \mathbb{P}\left(\mathbf{e}^{(i)}\otimes \bm{v}\right) \, ,
\end{align}  
where   $\mathbb{P}$ is  a ``stride permutation matrix''  that contains entries
\begin{align}
	P_{ij} = 
	\left\{ \begin{array} {rl} 
	1, & j = \lceil i/N\rceil + T[(i - 1)\mod N]\\
	0, & \text{otherwise} \, .
	\end{array} \right.  \nonumber
\end{align}
\end{lemma}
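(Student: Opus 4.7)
The plan is to reduce the statement to the standard Kronecker product eigenvalue identity $(A \otimes B)(x \otimes y) = (Ax) \otimes (By)$, after first recognizing that the stride permutation $\mathbb{P}$ is simply the commutation matrix that exchanges the ``layer-major'' and ``node-major'' orderings of a Kronecker product. This makes the lemma essentially a bookkeeping translation of a standard fact.

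First, I would apply the Kronecker identity directly: given $\bm{L}^{\mathrm{I}}(\delta)\bm{v} = \mu \bm{v}$ and $\mathbf{I}\mathbf{e}^{(i)} = \mathbf{e}^{(i)}$, the identity yields
\begin{equation}
\bigl(\bm{L}^{\mathrm{I}}(\delta) \otimes \mathbf{I}\bigr)\bigl(\bm{v} \otimes \mathbf{e}^{(i)}\bigr) \;=\; (\mu \bm{v}) \otimes \mathbf{e}^{(i)} \;=\; \mu\bigl(\bm{v} \otimes \mathbf{e}^{(i)}\bigr),
\end{equation}
so $\bm{v} \otimes \mathbf{e}^{(i)}$ is a right eigenvector of $\bm{L}^{\mathrm{I}}(\delta) \otimes \mathbf{I}$ with eigenvalue $\mu$. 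The left-eigenvector statement follows analogously from $\bm{u}^{*}\bm{L}^{\mathrm{I}}(\delta) = \mu \bm{u}^{*}$ by taking the conjugate transpose and using $(x\otimes y)^{*}(A \otimes B) = (x^{*}A) \otimes (y^{*}B)$, giving $\bm{u} \otimes \mathbf{e}^{(i)}$ as the left eigenvector.

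Next, I would verify the permutation identity $\mathbb{P}(\mathbf{e}^{(i)} \otimes \bm{w}) = \bm{w} \otimes \mathbf{e}^{(i)}$ for any $\bm{w} \in \mathbb{R}^{T}$. This is a purely index-level check. With $\mathbf{e}^{(i)} \in \mathbb{R}^{N}$ and $\bm{w} \in \mathbb{R}^{T}$, the vector $\bm{w} \otimes \mathbf{e}^{(i)}$ in the layer-major enumeration $p = (t-1)N + k$ has entry $w_{t}\,\delta_{k,i}$, while $\mathbf{e}^{(i)} \otimes \bm{w}$ in the node-major enumeration $j = (k'-1)T + t'$ has entry $\delta_{k',i}\,w_{t'}$. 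The defining relation $j = \lceil p/N \rceil + T[(p-1)\bmod N]$ sends $p = (t-1)N + i$ to $j = t + T(i-1) = (i-1)T + t$, so the $p$-th entry of $\mathbb{P}(\mathbf{e}^{(i)}\otimes\bm{w})$ equals the $j$-th entry of $\mathbf{e}^{(i)}\otimes\bm{w}$, which is $w_{t}$, matching the $p$-th entry of $\bm{w}\otimes\mathbf{e}^{(i)}$.

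Combining these two facts gives $\tilde{\mathbbm{v}}^{(i)} = \mathbb{P}(\mathbf{e}^{(i)} \otimes \bm{v}) = \bm{v} \otimes \mathbf{e}^{(i)}$ and analogously $\tilde{\mathbbm{u}}^{(i)} = \bm{u} \otimes \mathbf{e}^{(i)}$, completing the proof. The main obstacle is the index bookkeeping in the permutation step: one must be careful that $\mathbf{e}^{(i)}$ is interpreted as an element of $\mathbb{R}^{N}$ (not $\mathbb{R}^{T}$) and that the ``outer'' factor of the Kronecker product corresponds to the blockwise layer index dictated by $\mathbb{L}^{\mathrm{I}}(\delta) = \bm{L}^{\mathrm{I}}(\delta) \otimes \mathbf{I}$. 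A sign/transpose slip here would swap $\mathbb{P}$ with $\mathbb{P}^{\mathrm{T}}$ and spoil the eigenvector formula, but once the correspondence is pinned down, no further computation is required.
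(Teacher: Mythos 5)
Your proposal is correct and follows essentially the same route as the paper: both rest on the Kronecker mixed-product identity $(A\otimes B)(x\otimes y)=(Ax)\otimes(By)$ together with the fact that $\mathbb{P}$ swaps the two Kronecker orderings, the only cosmetic difference being that the paper phrases the permutation step as the conjugation $\bm{L}^{\mathrm{I}}(\delta)\otimes\mathrm{I}=\mathbb{P}\left(\mathrm{I}\otimes\bm{L}^{\mathrm{I}}(\delta)\right)\mathbb{P}^{\mathrm{T}}$ while you verify the equivalent commutation identity $\mathbb{P}\left(\mathbf{e}^{(i)}\otimes\bm{w}\right)=\bm{w}\otimes\mathbf{e}^{(i)}$ directly at the index level. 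Your explicit index check $p=(t-1)N+k\mapsto j=(k-1)T+t$ is correct and actually makes explicit a step the paper only asserts.
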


\begin{remark}
As discussed in \cite{taylor2017eigenvector,taylor2019tunable}, this stride permutation is a unitary matrix that changes the ordering of indices for a supramatrix associated with a multiplex network. That is,  the indices originally count by nodes and then layers, but after applying $\mathbb{P}$, the counting is first by layers and then by nodes.
\end{remark}

\begin{remark}
    Lemma~\ref{thm:thm1} is true for any choice $i\in\{1,\dots,N\}$, and so each eigenvalue $\mu$ of $\bm{L}^\mathrm{I}(\delta) \otimes \mathrm{I}$ has an eigenspace that is at least $N$-dimensional . [It may be larger if $\mu$ is a repeated eigenvalue of $\bm{L}^\mathrm{I}(\delta)$.]
\end{remark}

\begin{proof}
    The stride permutation yields an identity
    \begin{align}
        \bm{L}^{\mathrm{I}}(\delta) \otimes \mathrm{I}
        = \mathbb{P}  \left( \mathrm{I} \otimes  \bm{L}^{\mathrm{I}}(\delta) \right) \mathbb{P}^{\mathrm{T}} \, ,
    \end{align}
    where
    \begin{align}
        \mathrm{I} \otimes  \bm{L}^{\mathrm{I}}(\delta) = 
         \left(\begin{matrix}
         \bm{L}^{\mathrm{I}}(\delta) & 0 & \cdots\ & 0 \\
        0 &  \bm{L}^{\mathrm{I}} (\delta) & \cdots\ & 0 \\
        \vdots & \vdots & \ddots & \vdots\\
        0 & 0 & \cdots\ &  \bm{L}^{\mathrm{I}}(\delta)
        \end{matrix}\right) \, .
    \end{align}
    Since $\mathbb{P}$ is a unitary matrix, the eigenvalues of $ \bm{L}^{\mathrm{I}}(\delta) \otimes {I}$ are identical to those of  $ \mathrm{I} \otimes  \bm{L}^{\mathrm{I}}(\delta)$. Moreover, if $\mathbbm{v}$ is a right eigenvector of $\mathrm{I} \otimes  \bm{L}^{\mathrm{I}}(\delta) $ with eigenvalue $\mu_2$, then $\mathbb{P}\mathbbm{v}$ is a right eigenvector of $\bm{L}^{\mathrm{I}}(\delta) \otimes \mathrm{I}$. This can easily be checked:
    \begin{align}
        \left( \bm{L}^{\mathrm{I}}(\delta) \otimes \mathrm{I} \right) \mathbb{P}\mathbbm{v} 
        &= \mathbb{P}  \left( \mathrm{I} \otimes  \bm{L}^{\mathrm{I}}(\delta)\right) \mathbb{P}^{\mathrm{T}} \mathbb{P}\mathbbm{v} \nonumber\\
        &= \mathbb{P}  \left( \mathrm{I} \otimes  \bm{L}^{\mathrm{I}}(\delta) \right) \mathbbm{v} \nonumber\\
        &= \mu \mathbb{P}   \mathbbm{v}.
    \end{align}
    One can similarly show that if $\mathbbm{u}$ is a left eigenvector for  $\mathrm{I} \otimes  \bm{L}^{\mathrm{I}}(\delta) $, then $\mathbb{P}\mathbbm{u}$ is one for $\bm{L}^{\mathrm{I}}(\delta) \otimes \mathrm{I}$.
    What remains for us to show is that $\mathbbm{u}=\mathbf{e}^{(i)}\otimes \bm{u}$ and $\mathbbm{v}=\mathbf{e}^{(i)}\otimes \bm{v}$ are left and right eigenvectors of $\mathrm{I} \otimes  \bm{L}^{\mathrm{I}}(\delta) $.
    We prove this using a standard property for the product of two Kronecker products:
    \begin{align}
        \left(\mathrm{I}  \otimes  \bm{L}^{\mathrm{I}}(\delta) \right)\left( \mathbf{e}^{(i)}\otimes \bm{v} \right) 
        &=   \left(\mathrm{I} ~\mathbf{e}^{(i)}\right) \otimes \left(\bm{L}^{\mathrm{I}}(\delta) \bm{v} \right)  \nonumber\\
        &= \mathbf{e}^{(i)} \otimes \mu   \bm{v}  \nonumber\\
        &= \mu(\mathbf{e}^{(i)} \otimes    \bm{v}).
    \end{align}
    A similar result can be obtained for the left eigenvector $\mathbf{e}^{(i)}\otimes \bm{u}$. 
\end{proof}

Having established basic results for the spectral properties of $\bm{L}^\mathrm{I}(\delta) \otimes \mathrm{I}$ in Lemma~\ref{thm:thm1}, we are now ready to study the eigenvalue $\tilde{\lambda}_2(\epsilon,\delta)$ of $\tilde{\mathbb{L}}(\epsilon, \delta)$ that has the second-smallest real part in the limit $\epsilon\to0^+$. 
We formalize this result with in following theorem.

\begin{theorem}
    Assume that a supraLaplacian $\mathbb{L}(\omega,\delta)$ 
    corresponds to a strongly connected graph.
    Further, let $\bm{u}(\delta)=[u_1(\delta),\dots,u_T(\delta)]^\mathrm{T}$ be the left eigenvector of an interlayer Laplacian $\bm{L}^{\mathrm{I}}(\delta)$ that is associated with the zero-valued (i.e., trivial) eigenvalue.
    Then the eigenvalue $\lambda_2(\omega,\delta)$ of  $\mathbb{L}(\omega,\delta)$ that has second-smallest real part has the following limit: 
    \begin{align}\label{eq:lim_lam244}
        \lim_{\omega\to\infty} \lambda_2(\omega,\delta)  &= \overline{\lambda}_2(\delta)\, ,
    \end{align}
    where $\overline{\lambda}_2(\delta)$ is the  eigenvalue  of  matrix 
    \begin{eqnarray}\label{eq:L_bar44}
        \overline{\mathbf{L}}(\delta)
        &=&   \sum_{t=1}^\mathrm{T} w_{t}(\delta)    \mathbf{L}^{(t)}
    \end{eqnarray}
    that has the second-smallest real part, and $w_t (\delta) = u_t(\delta)/ {\sum_{t'} u_{t'}(\delta)}$.

\begin{proof}
    Consider first-order Taylor expansions for the eigenvalue $\tilde{\lambda}_2(\epsilon,\delta)$ of $\mathbb{\tilde{L}}(\epsilon, \delta)$ and its associated left and right eigenvectors:
    \begin{subequations}
    \begin{eqnarray} \label{eq:perturbation}
    \tilde{\lambda}_2(\epsilon, \delta) 
    &=& \tilde{\lambda}_2(0, \delta) 
    + \epsilon \tilde{\lambda}^{\prime}_2(0, \delta) 
    + \mathcal{O}(\epsilon^2)  ,  \\
    \tilde{\mathbbm{u}}(\epsilon, \delta) &=& \tilde{\mathbbm{u}} (0, \delta) + \epsilon{\tilde{\mathbbm{u}}'}(0, \delta) + \mathcal{O}(\epsilon^2),  \\
    \tilde{\mathbbm{v}}(\epsilon, \delta) &=& \tilde{\mathbbm{v}} (0, \delta) + \epsilon{\tilde{\mathbbm{v}}'}(0, \delta) + \mathcal{O}(\epsilon^2)  . 
    \end{eqnarray}
    \end{subequations}
    Note that we have   defined the   derivatives
    \begin{subequations}
    \begin{align}
    	\tilde{\lambda}_2'(\epsilon,\delta) \equiv&    \frac{d}{d\epsilon} \tilde{\lambda}_2(\epsilon,\delta),  \\
    	{\tilde{\mathbbm{u}}'}(\epsilon,\delta) \equiv& \frac{d}{d\epsilon} {\tilde{\mathbbm{u}}}(\epsilon,\delta),  \\
    	{\tilde{\mathbbm{v}}'}(\epsilon,\delta) \equiv& \frac{d}{d\epsilon} {\tilde{\mathbbm{v}}}(\epsilon,\delta)  .
    \end{align}
    \end{subequations}
    
    We first consider the term $\tilde{\lambda}_2(0, \delta) $.
    Since we assumed $\bm{L}^{\mathrm{I}}(\delta)$ to be the Laplacian of a  strongly connected graph, the smallest eigenvalue $\mu_1=0$ of $\bm{L}^{\mathrm{I}}(\delta)$ is guaranteed to be a simple eigenvalue with multiplicity 1.
    Lemma~\ref{thm:thm1} implies $\mu_1=0$ is an eigenvalue of $\bm{L}^{\mathrm{I}}(\delta) \otimes \mathrm{I}$ with an $N$-dimensional eigenspace. Hence,  matrix $\tilde{\mathbb{L}}(\epsilon,\delta)$ has $N$ eigenvalues that converge to $0$ as $\epsilon\to0^+$ (and in fact, one of these eigenvalues is always exactly equal to zero). By   definition (i.e., since it has the smallest, positive real part), the eigenvalue $\tilde{\lambda}_2(\epsilon,\delta)$ of $\tilde{\mathbb{L}}(\epsilon,\delta)$ must be one of these eigenvalues, which implies that  $\tilde{\lambda}_2(0,\delta)=0$.
    
Next, we consider the derivative term $\tilde{\lambda}^{\prime}_2(0, \delta)  $. We will show that it equals the second smallest eigenvalue of the  matrix defined in \eqref{eq:L_bar}.
    To this end, we consider the eigenvalue equation 
    \begin{align}
    \tilde{\mathbb{L}}(\epsilon,\delta)\tilde{\mathbbm{v}}(\epsilon, \delta) 
    &= \tilde{\lambda}_2(\epsilon,\delta)
    \tilde{\mathbbm{v}}(\epsilon, \delta) ,
    \end{align}
    and we expand all terms to first order to obtain
    \begin{align}\label{eq:expand}
        \Big[\epsilon\mathbb{L}^{\mathrm{L}}&+\mathbb{L}^{\mathrm{I}}(\delta)\Big] 
        \Big[\tilde{\mathbbm{v}}(0, \delta) + \epsilon{\tilde{\mathbbm{v}}}^{\prime}(0, \delta) \Big] \nonumber\\
        = & \Big[\tilde{\lambda}_2(0, \delta) + \epsilon \tilde{\lambda}_2'(0, \delta) \Big]
        \Big[\tilde{\mathbbm{v}}(0, \delta) + \epsilon {\tilde{\mathbbm{v}}}^{\prime}(0, \delta) \Big]. 
    \end{align}
    The zeroth-order and first-order terms must be consistent, which gives rise to two separate equations. The  equation associated with the zeroth-order terms yields an eigenvalue equation 
    \begin{align}
        \mathbb{L}^{\mathrm{I}} (\delta) \tilde{\mathbbm{v}} (0, \delta)
        &= \tilde{\lambda}_2(0, \delta)   \tilde{\mathbbm{v}} (0, \delta).
    \end{align}
    Lemma~\ref{thm:thm1} implies that the eigenvalue $\tilde{\lambda}_2(0, \delta) =0$ has an $N$-dimensional right eigenspace spanned by   right eigenvectors having the form $\tilde{\mathbbm{v}}^{(i)} = \mathbb{P}\left(\mathbf{e}^{(i)}\otimes \bm{v}\right) $  for $i\in\{1,2,\dots,N\}$. We similarly define $\tilde{\mathbbm{u}}^{(i)} = \mathbb{P}\left(\mathbf{e}^{(i)}\otimes \bm{u}\right) $ for the left eigenspace. Note that the vectors $\bm{u}$ and $\bm{v}$ contain entries that are nonnegative, which can be proved using the Perron--Frobenius theorem.
    
    Next, we expand $\tilde{\mathbbm{v}} (0, \delta)$ in this eigenbasis as
    \begin{align}\label{eq:general}
        \tilde{\mathbbm{v}} (0, \delta) = \sum_i \tilde{\alpha}_i \tilde{\mathbbm{v}}^{(i)}.
    \end{align}
    We  define a vector of  coordinates $\tilde{\bm{\alpha}} =[\tilde{\alpha}_1,\dots, \tilde{\alpha}_N]^{\mathrm{T}}$ that must be determined, and we note that the vector must be normalized with $||\tilde{\bm{\alpha}}||_2=1$.
    
    The first-order terms in \eqref{eq:expand} give rise to a linear   equation
    \begin{align}
        \mathbb{L}^{\mathrm{L}}  \tilde{\mathbbm{v}} (0, \delta)
        + \mathbb{L}^{\mathrm{I}}(\delta) {\tilde{\mathbbm{v}} }^{\prime}(0, \delta)
        &= \tilde{\lambda}^{\prime}_2(0, \delta)   \tilde{\mathbbm{v}} (0, \delta) \, ,
    \end{align}
    which has used that $\tilde{\lambda}_2(0, \delta)=0$.    
    
    To solve for $\tilde{\lambda}^{\prime}_2(0, \delta) $, we left multiple by a left eigenvector 
    $\tilde{\mathbbm{u}}^{(i')} = \mathbb{P} \left(\mathbf{e}^{(i')}\otimes \bm{u}\right)$ of $\mathbb{L}^{\mathrm{I}}(\delta)$
    and again use $\tilde{\lambda}_2(0, \delta)=0$ to obtain
    \begin{equation}
        \left[\tilde{\mathbbm{u}}^{(i')} \right]^{*} 
        \mathbb{L}^{\mathrm{L}}\tilde{\mathbbm{v}}(0, \delta) 
        = \tilde{\lambda}^{\prime}_2(0, \delta)  
        \left[\tilde{\mathbbm{u}}^{(i')} \right]^{*} \tilde{\mathbbm{v}}(0, \delta) ,
    \end{equation}
    where $ \left[\tilde{\mathbbm{u}}^{(i')} \right]^{*} $ denotes the conjugate transpose of vector $\tilde{\mathbbm{u}}^{(i')}$. 
    Using the general form of $\tilde{\mathbbm{v}}(0, \delta)$ from \eqref{eq:general}, we obtain
    \begin{align}
        \sum_{i=1}^N \tilde{\alpha}_{i} \left[ \tilde{\mathbbm{u}}^{(i^{\prime})}\right]^{*} \mathbb{L}^{\mathrm{L}}  \tilde{\mathbbm{v}}^{(i)} 
        &= \tilde{\lambda}^{\prime}_2(0, \delta)  \sum_{i=1}^N \tilde{\alpha}_{i} \left[\tilde{\mathbbm{u}}^{(i^{\prime})}\right]^{*} \tilde{\mathbbm{v}}^{(i)}.
    \end{align}
    This system is identical to the following eigenvalue equation 
    $\overline{\mathbf{L}}(\delta)
    \tilde{\boldsymbol\alpha} = \tilde{\lambda}^{\prime}_2(0, \delta) \tilde{\boldsymbol\alpha} \,,
    $ 
    where
    \begin{align}
        \overline{\mathrm{L}}(\delta)_{i^{\prime}i}
        &= \frac{\left[\tilde{\mathbbm{u}}^{(i^{\prime})}\right]^{*}   \mathbb{L}^{\mathrm{L}}   \tilde{\mathbbm{v}}^{(i)}}{\left[\tilde{\mathbbm{u}}^{(i^{\prime})}\right]^{*} \tilde{\mathbbm{v}}^{(i)}} ,
    \end{align}
    $\tilde{\boldsymbol\alpha} = [\tilde{\alpha}_{1},\dots,\tilde{\alpha}_{n}]^{\mathrm{T}}$,
    and $\tilde{\lambda}_2'(0,\delta)=\overline{\lambda}_2(\delta)$ is the eigenvalue with second-smallest real part.
    We can further simplify this result using the definitions of $\tilde{\mathbbm{u}}^{(i')} = \mathbb{P} \left(\mathbf{e}^{(i')}\otimes \bm{u}\right)$ and $\tilde{\mathbbm{v}}^{(i)} = \mathbb{P} \left(\mathbf{e}^{(i)}\otimes \bm{v}\right)$ to obtain
    \begin{align}\label{eq:appp}
        \overline{\mathrm{L}}(\delta)_{i^{\prime}i}
        = \frac{\left[\bm{u}\right]^{*} \bm{L}^{(i^{\prime},i)} \bm{v}}
        {\left[\bm{u}\right]^{*} \bm{v}}
        = \sum_{t=1}^{T} 
        \left( \frac{ u_{t} }{\sum_t u_\tau} \right) \mathrm{L}_{i^{\prime}i}^{(t)} ,
    \end{align}
    where 
    \begin{align}\label{eq:appp2}
        \bm{L}^{(i^{\prime},i)} = \text{diag}\left(\mathrm{L}_{i^{\prime}i}^{(1)}, \mathrm{L}_{i^{\prime}i}^{(2)}, \dots, \mathrm{L}_{i^{\prime}i}^{(T)}\right),
    \end{align}
    and
    $\mathrm{L}_{i^{\prime}i}^{(t)}$ denotes the $(i^{\prime},i)$-component of   intralayer Laplacian $\mathbf{L}^{(t)}$.
    \eqref{eq:appp} follows after using     the definition of $\mathbb{P}$, which is a unitary matrix that permutes the enumeration of nodes and layers as described for Lemma~\ref{thm:thm1}. \eqref{eq:appp2}  follows after using that the right eigenvector $\bm{v}$ that is associated with the zero eigenvalue is spanned by the all-ones vector, $\bm{v} \varpropto[1,\dots,1]^{\mathrm{T}}$.
    Finally, we recall that  $\epsilon = 1/\omega$ and  
    $\epsilon \lambda_2(\omega,\delta) = \tilde{\lambda}_2(0, \delta) + \epsilon\tilde{\lambda}^{\prime}_2(0, \delta) + \mathcal{O}(\epsilon^2),
    $
    which implies
    \begin{align}
        \lim_{\omega\to\infty} \lambda_2(\omega,\delta) = \lim_{\epsilon\to0^+} \left[ 0 +  \tilde{\lambda}^{\prime}_2(0, \delta) + \mathcal{O}(\epsilon)\right] =  \overline{\lambda}_2(\delta).
    \end{align}
\end{proof}
\end{theorem}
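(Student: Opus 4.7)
The plan is to reduce the large-$\omega$ limit to a regular singular-perturbation problem by substituting $\epsilon = 1/\omega$ and studying the rescaled matrix $\tilde{\mathbb{L}}(\epsilon,\delta) = \epsilon\mathbb{L}^{\mathrm{L}} + \mathbb{L}^{\mathrm{I}}(\delta)$, whose eigenvalues are $\epsilon\lambda$ for $\lambda$ an eigenvalue of $\mathbb{L}(\omega,\delta)$. The unperturbed operator $\mathbb{L}^{\mathrm{I}}(\delta) = \bm{L}^{\mathrm{I}}(\delta)\otimes\mathbf{I}$ has a highly degenerate trivial eigenvalue, so the analysis will be degenerate perturbation theory rather than the non-degenerate version.

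First I would establish the spectral structure of $\bm{L}^{\mathrm{I}}(\delta)\otimes\mathbf{I}$ via a lemma: using the stride permutation $\mathbb{P}$ and the identity $\bm{L}^{\mathrm{I}}(\delta)\otimes\mathbf{I} = \mathbb{P}(\mathbf{I}\otimes\bm{L}^{\mathrm{I}}(\delta))\mathbb{P}^{\mathrm{T}}$, each eigenvalue $\mu$ of $\bm{L}^{\mathrm{I}}(\delta)$ with left/right eigenvectors $\bm{u},\bm{v}$ lifts to vectors $\tilde{\mathbbm{u}}^{(i)} = \mathbb{P}(\mathbf{e}^{(i)}\otimes\bm{u})$ and $\tilde{\mathbbm{v}}^{(i)} = \mathbb{P}(\mathbf{e}^{(i)}\otimes\bm{v})$ for $i=1,\dots,N$. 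Applied to the trivial eigenvalue $\mu_1=0$ (which is simple by strong connectivity of the interlayer graph and Perron--Frobenius), this yields an $N$-dimensional kernel of $\mathbb{L}^{\mathrm{I}}(\delta)$ on which the perturbation must be diagonalized.

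Next I would apply first-order Taylor expansions
\begin{equation*}
\tilde{\lambda}_2(\epsilon,\delta) = \tilde{\lambda}_2(0,\delta) + \epsilon\,\tilde{\lambda}_2'(0,\delta) + \mathcal{O}(\epsilon^2), \qquad \tilde{\mathbbm{v}}(\epsilon,\delta) = \tilde{\mathbbm{v}}(0,\delta) + \epsilon\,\tilde{\mathbbm{v}}'(0,\delta) + \mathcal{O}(\epsilon^2),
\end{equation*}
substitute into the eigenvalue equation $\tilde{\mathbb{L}}(\epsilon,\delta)\tilde{\mathbbm{v}}(\epsilon,\delta) = \tilde{\lambda}_2(\epsilon,\delta)\tilde{\mathbbm{v}}(\epsilon,\delta)$, and match zeroth- and first-order terms. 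The zeroth-order equation forces $\tilde{\lambda}_2(0,\delta) = 0$ and $\tilde{\mathbbm{v}}(0,\delta) = \sum_i \tilde{\alpha}_i\tilde{\mathbbm{v}}^{(i)}$ to lie in the kernel, with coordinates $\tilde{\boldsymbol{\alpha}}$ yet to be determined. Left-multiplying the first-order equation by each $[\tilde{\mathbbm{u}}^{(i')}]^*$ and using Kronecker-product arithmetic will produce a reduced $N\times N$ eigenvalue problem $\overline{\mathbf{L}}(\delta)\tilde{\boldsymbol{\alpha}} = \tilde{\lambda}_2'(0,\delta)\tilde{\boldsymbol{\alpha}}$ whose matrix entries simplify, after using that the right eigenvector $\bm{v}$ at $\mu_1=0$ is proportional to the all-ones vector, to
\begin{equation*}
\overline{\mathrm{L}}(\delta)_{i'i} = \sum_{t=1}^T \frac{u_t(\delta)}{\sum_{t'} u_{t'}(\delta)}\,\mathrm{L}_{i'i}^{(t)} = \sum_{t=1}^T w_t(\delta)\,\mathrm{L}_{i'i}^{(t)}.
\end{equation*}
Finally, undoing the rescaling gives $\lim_{\omega\to\infty}\lambda_2(\omega,\delta) = \tilde{\lambda}_2'(0,\delta) = \overline{\lambda}_2(\delta)$, where $\overline{\lambda}_2(\delta)$ is selected as the eigenvalue of $\overline{\mathbf{L}}(\delta)$ with second-smallest real part since the zero mode of $\overline{\mathbf{L}}(\delta)$ accounts for the one exact zero eigenvalue of $\mathbb{L}(\omega,\delta)$.

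The main obstacle will be the degeneracy: because $\mu_1=0$ has an $N$-dimensional eigenspace, I cannot apply non-degenerate Rayleigh--Schr\"odinger perturbation and must instead project onto the kernel and diagonalize the reduced operator, which requires carefully tracking biorthogonality between left and right eigenvectors through the stride permutation. A secondary subtlety is verifying that the eigenvalue of $\tilde{\mathbb{L}}(\epsilon,\delta)$ branch we are tracking is indeed the one with second-smallest real part for all small $\epsilon>0$; this relies on the fact that only $N$ eigenvalues of $\tilde{\mathbb{L}}(\epsilon,\delta)$ collapse to $0$ as $\epsilon\to 0^+$, while the remaining $N(T-1)$ eigenvalues are bounded away from $0$ by the spectral gap of $\bm{L}^{\mathrm{I}}(\delta)$, so for sufficiently small $\epsilon$ the ordering by real part within the degenerate cluster coincides with that of $\overline{\mathbf{L}}(\delta)$.
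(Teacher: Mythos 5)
Your proposal follows essentially the same route as the paper's proof: the $\epsilon=1/\omega$ rescaling, the stride-permutation lemma identifying the $N$-dimensional kernel of $\bm{L}^{\mathrm{I}}(\delta)\otimes\mathbf{I}$, degenerate first-order perturbation theory projecting onto that kernel, and the reduction to the weighted-average Laplacian $\overline{\mathbf{L}}(\delta)$ via the left Perron vector. The only addition is your explicit remark about the spectral gap separating the collapsing cluster from the remaining $N(T-1)$ eigenvalues, which the paper leaves implicit but which is a correct and worthwhile clarification.
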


\begin{corollary}
    Let $\mathbf{L}^{(1)}$ and $\mathbf{L}^{(2)}$ be the intralayer Laplacians of a two-layer multiplex network. The weighted-average Laplacian $\overline{\mathbf{L}}(\delta)$ given in \eqref{eq:L_bar} can be simplified as
    \begin{align} 
        \overline{\mathbf{L}}(\delta) = \left( \frac{1+\delta}{2} \right) \mathbf{L}^{(1)} + \left(\frac{1-\delta }{2}\right) \mathbf{L}^{(2)}.
    \end{align}

\begin{proof}
    The interlayer Laplacian  for $T=2$ asymmetrically coupled layers is given by \eqref{eq:LI_2}, and its zero-valued eigenvalue  has left eigenvector $\bm{u}(\delta)=[ 1+\delta ,1-\delta ]^\mathrm{T}$. The result follows after using that  $u_1(\delta) = 1+\delta$, $u_2(\delta) = 1-\delta$, and $u_1(\delta) +u_2(\delta) = 2$.
\end{proof}
\end{corollary}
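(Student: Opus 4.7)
The plan is to reduce everything to the general weighted-average formula $\overline{\mathbf{L}}(\delta) = \sum_{t=1}^{T} w_t(\delta)\,\mathbf{L}^{(t)}$ established in the preceding Theorem, specialized to $T=2$. The only substantive step is to compute the left null vector $\bm{u}(\delta) = (u_1(\delta), u_2(\delta))^T$ of $\bm{L}^{\mathrm{I}}(\delta)$, after which the weights follow immediately as $w_t(\delta) = u_t(\delta)/(u_1(\delta) + u_2(\delta))$ and the rest is substitution.

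Concretely, I would first expand Eq.~\eqref{eq:LI_2} to obtain
\[
\bm{L}^{\mathrm{I}}(\delta) = \begin{pmatrix} 1 - \delta & -1 + \delta \\ -1 - \delta & 1 + \delta \end{pmatrix},
\]
and then solve the left-eigenvector equation $\bm{u}(\delta)^T \bm{L}^{\mathrm{I}}(\delta) = \bm{0}^T$. Both resulting scalar equations collapse to the single relation $u_1(1-\delta) = u_2(1+\delta)$, so one may take $\bm{u}(\delta) = (1+\delta,\, 1-\delta)^T$ (up to an overall scaling that cancels in the normalization). Because $u_1(\delta) + u_2(\delta) = 2$ identically, the weights are $w_1(\delta) = (1+\delta)/2$ and $w_2(\delta) = (1-\delta)/2$, and substituting into the Theorem's formula yields the claimed expression for $\overline{\mathbf{L}}(\delta)$.

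There is essentially no obstacle here; the corollary is a short explicit computation that piggybacks on the Theorem. The one place where care is warranted is in distinguishing the \emph{left} null vector from the right one. The rows of $\bm{L}^{\mathrm{I}}(\delta)$ sum to zero (making $(1,1)^T$ a right null vector for every $\delta$), whereas the column sums are $-2\delta$ and $2\delta$, so the left null vector genuinely depends on $\delta$---and it is precisely this asymmetry in the left eigenstructure that produces the biased averaging weights. As a sanity check, the formula reduces to the symmetric average $\tfrac{1}{2}(\mathbf{L}^{(1)}+\mathbf{L}^{(2)})$ at $\delta=0$ and converges to $\mathbf{L}^{(1)}$ or $\mathbf{L}^{(2)}$ alone as $\delta\to\pm 1$, consistent with Eq.~\eqref{eq:limlim}.
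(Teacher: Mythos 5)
Your proof is correct and follows essentially the same route as the paper's: both identify the left null vector $\bm{u}(\delta)=[1+\delta,\,1-\delta]^{\mathrm{T}}$ of $\bm{L}^{\mathrm{I}}(\delta)$ from Eq.~\eqref{eq:LI_2}, normalize by $u_1+u_2=2$, and substitute into Eq.~\eqref{eq:L_bar}. You simply spell out the explicit matrix and the left-eigenvector equation in more detail than the paper does, which is fine.
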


\section{Derivation of Equations~\eqref{eq:lam2_prime} and \eqref{eq:critical_chi}}\label{sec:appendix_B}

\setcounter{equation}{0}
\begin{lemma}
Let $\mathbf{L}^{(1)}$ and $\mathbf{L}^{(2)}$ be the intralayer Laplacians of a two-layer multiplex network, and define $\chi\in(0,1]$ to be a time-scaling parameter that varies the relative timescale of dynamics for the two layers through the mapping:  $\mathbf{L}^{(1)}\mapsto \chi \mathbf{L}^{(1)}$  and $\mathbf{L}^{(2)}\mapsto (1-\chi) \mathbf{L}^{(2)}$. (Note that the dynamics of layer 1 is much faster as $\chi\to1$, whereas  layer 2 is much faster as $\chi\to0$.)
    Further, let $\overline{\mathbf{L}}(\delta)$ be the weighted-average  Laplacian  given in \eqref{eq:L_bar_2} under this mapping. It  then follows that  
    \begin{equation} \label{eq:L_bar_chi}
	 \overline{\mathbf{L}}(\delta) = \frac{1 + \delta}{2}\,\chi\mathbf{L}^{(1)} + \frac{1 - \delta}{2}\,(1-\chi)\mathbf{L}^{(2)} \,. 
    \end{equation}
\end{lemma}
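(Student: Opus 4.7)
The plan is to recognize that this lemma is essentially a direct substitution into the formula for $\overline{\mathbf{L}}(\delta)$ that was established in the preceding Corollary, so I would prove it in just a few lines. The key observation is that the weights $w_t(\delta) = u_t(\delta)/\sum_{t'} u_{t'}(\delta)$ appearing in Eq.~\eqref{eq:L_bar} depend only on the left eigenvector $\bm{u}(\delta)$ of the interlayer Laplacian $\bm{L}^{\mathrm{I}}(\delta)$, and hence are entirely independent of the intralayer Laplacians $\mathbf{L}^{(t)}$. Therefore, rescaling the intralayer Laplacians by $\chi$ and $1-\chi$ does not alter the weights in the weighted average, it only rescales the matrices being averaged.

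First, I would invoke the Corollary stated immediately above, which established that for a two-layer multiplex network the weighted-average Laplacian takes the form $\overline{\mathbf{L}}(\delta) = \tfrac{1+\delta}{2}\mathbf{L}^{(1)} + \tfrac{1-\delta}{2}\mathbf{L}^{(2)}$. This follows because $\bm{L}^{\mathrm{I}}(\delta)$ has the explicit $2\times 2$ form given in Eq.~\eqref{eq:LI_2}, whose trivial left eigenvector is $\bm{u}(\delta) = [1+\delta, 1-\delta]^{\mathrm{T}}$, yielding the stated weights. Next, I would apply the substitutions $\mathbf{L}^{(1)} \mapsto \chi\mathbf{L}^{(1)}$ and $\mathbf{L}^{(2)} \mapsto (1-\chi)\mathbf{L}^{(2)}$ directly into this expression. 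Since these mappings affect only the intralayer blocks and not the interlayer coupling matrix $\bm{L}^{\mathrm{I}}(\delta)$, the weights $(1+\delta)/2$ and $(1-\delta)/2$ are unchanged, and the stated formula Eq.~\eqref{eq:L_bar_chi} follows immediately.

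There is no real obstacle in this proof; the only thing worth being careful about is to justify in one sentence why the $\chi$-rescaling of intralayer Laplacians leaves the weights invariant. This is because the weights are determined entirely by the interlayer Laplacian $\bm{L}^{\mathrm{I}}(\delta)$, which does not enter the mapping defined in Eq.~\eqref{eq:chi}. Thus the lemma functions as a straightforward corollary of Eq.~\eqref{eq:L_bar_2} rather than requiring any fresh perturbation calculation.
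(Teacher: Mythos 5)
Your proof is correct and matches the paper's treatment: the paper likewise obtains Eq.~\eqref{eq:L_bar_chi} by direct substitution of the rescaled intralayer Laplacians into Eq.~\eqref{eq:L_bar_2}, with the weights unchanged since they depend only on $\bm{L}^{\mathrm{I}}(\delta)$. Nothing further is needed.
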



\begin{theorem}\label{thm:t4}
Let $\overline{\lambda}_2'(\delta) \equiv \frac{d}{d\delta} \text{Re}\left(\overline{\lambda}_2(\delta) \right)$ be the derivative of the real part of 
$\overline{\lambda}_2(\delta)$ who is the second-smallest eigenvalue of the weighted-average  Laplacian $\overline{\mathbf{L}}(\delta)$. 
Further, let $\mathbf{u}^{(t)}$ and $\mathbf{v}^{(t)}$ be the left and right eigenvectors for the second smallest eigenvalue $\lambda^{(t)}_2$ of intralayer Laplacian $\mathbf{L}^{(t)}$ for $t\in\{1,2\}$, assuming $\lambda^{(t)}_2$ is simple.
We then find the following limits as $\delta \to \pm 1$
%
\begin{subequations}
\begin{align}\label{eq:lam2_prime44}
            \overline{\lambda}_2^{\prime}(1) &=
             \text{Re} \left( \frac{-{\mathbf{u}^{(1)}}^* \mathbf{L}^{(2)} \mathbf{v}^{(1)}} {2{\mathbf{u}^{(1)}}^* \mathbf{v}^{(1)}}  \right)\nonumber\\
            &~~~ + \chi \, 
             \text{Re} \left( \frac{{\mathbf{u}^{(1)}}^*
             \left(\mathbf{L}^{(1)} + \mathbf{L}^{(2)}\right) \mathbf{v}^{(1)}} 
             {2{\mathbf{u}^{(1)}}^* \mathbf{v}^{(1)}} \right),  \\
             \overline{\lambda}_2^{\prime}(-1)& =  
             \text{Re} \left( \frac{-\lambda^{(2)}_2}{2} \right) \nonumber\\
             &~~~ + \chi \,
             \text{Re} \left( \frac{{\mathbf{u}^{(2)}}^* 
             \left(\mathbf{L}^{(1)} +\mathbf{L}^{(2)} \right) \mathbf{v}^{(2)}} 
             {2{\mathbf{u}^{(2)}}^*  \mathbf{v}^{(2)}} \right) , 
\end{align}
\end{subequations}
where 
$ \mathbf{u}^*$ denotes the conjugate transpose of vector $ \mathbf{u}$.
\end{theorem}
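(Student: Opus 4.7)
The plan is to apply first-order eigenvalue perturbation theory to the one-parameter family $\overline{\mathbf{L}}(\delta) = \frac{1+\delta}{2}\chi\mathbf{L}^{(1)} + \frac{1-\delta}{2}(1-\chi)\mathbf{L}^{(2)}$, which is linear (hence analytic) in $\delta$ with constant derivative $\mathbf{B}_\chi := d\overline{\mathbf{L}}/d\delta = \tfrac{1}{2}\chi\mathbf{L}^{(1)} - \tfrac{1}{2}(1-\chi)\mathbf{L}^{(2)}$. The workhorse is the standard identity $d\lambda/d\delta = \mathbf{u}^{*}(d\mathbf{M}/d\delta)\mathbf{v}/(\mathbf{u}^{*}\mathbf{v})$ for a simple eigenvalue of a matrix $\mathbf{M}$ depending analytically on a real parameter, where $\mathbf{u}$ and $\mathbf{v}$ are the left and right eigenvectors (normalized arbitrarily).

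First, I would evaluate the family at the endpoint $\delta = 1$: $\overline{\mathbf{L}}(1) = \chi\mathbf{L}^{(1)}$, whose spectrum is $\chi$ times that of $\mathbf{L}^{(1)}$ with identical eigenvectors. The hypothesis that $\lambda_2^{(1)}$ is simple makes $\chi\lambda_2^{(1)}$ a simple eigenvalue of $\overline{\mathbf{L}}(1)$ with left/right eigenvectors $\mathbf{u}^{(1)}, \mathbf{v}^{(1)}$. Kato's analytic-perturbation theory then produces a locally holomorphic branch $\delta\mapsto\lambda(\delta)$ with $\lambda(1) = \chi\lambda_2^{(1)}$, and I would identify $\overline{\lambda}_2(\delta)$ with this branch on a one-sided neighborhood of $\delta = 1$.

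Second, the perturbation formula gives
\begin{equation*}
\overline{\lambda}_2^{\prime}(1) \;=\; \frac{\chi}{2}\,\frac{{\mathbf{u}^{(1)}}^{*}\mathbf{L}^{(1)}\mathbf{v}^{(1)}}{{\mathbf{u}^{(1)}}^{*}\mathbf{v}^{(1)}} \;-\; \frac{1-\chi}{2}\,\frac{{\mathbf{u}^{(1)}}^{*}\mathbf{L}^{(2)}\mathbf{v}^{(1)}}{{\mathbf{u}^{(1)}}^{*}\mathbf{v}^{(1)}}.
\end{equation*}
I would then use the Rayleigh-quotient identity ${\mathbf{u}^{(1)}}^{*}\mathbf{L}^{(1)}\mathbf{v}^{(1)}/{\mathbf{u}^{(1)}}^{*}\mathbf{v}^{(1)} = \lambda_2^{(1)}$ and add and subtract $\tfrac{\chi}{2}{\mathbf{u}^{(1)}}^{*}\mathbf{L}^{(2)}\mathbf{v}^{(1)}/{\mathbf{u}^{(1)}}^{*}\mathbf{v}^{(1)}$ to recast the right-hand side into the two claimed pieces: a $\chi$-independent term proportional to $-{\mathbf{u}^{(1)}}^{*}\mathbf{L}^{(2)}\mathbf{v}^{(1)}$, and a $\chi$-proportional term involving $\mathbf{L}^{(1)}+\mathbf{L}^{(2)}$. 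Since $\delta$ and $\chi$ are real, $\text{Re}(\cdot)$ commutes with $d/d\delta$, so taking real parts term-by-term yields the upper line of Eq.~\eqref{eq:lam2_prime44}. A mirror analysis at $\delta = -1$, using $\overline{\mathbf{L}}(-1) = (1-\chi)\mathbf{L}^{(2)}$ and the pair $(\mathbf{u}^{(2)},\mathbf{v}^{(2)})$ together with ${\mathbf{u}^{(2)}}^{*}\mathbf{L}^{(2)}\mathbf{v}^{(2)}/{\mathbf{u}^{(2)}}^{*}\mathbf{v}^{(2)} = \lambda_2^{(2)}$, produces the lower line.

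The main obstacle is conceptual rather than algebraic: one must justify that the function $\delta\mapsto\overline{\lambda}_2(\delta)$, defined as the eigenvalue with second-smallest real part, agrees near the boundary with the analytic branch emanating from the endpoint eigenvalue. This is a local spectrum-separation requirement that can fail at a spectral collision (compare the nonrobust-optimum phenomenon illustrated in Appendix~\ref{sec:nonr} and Fig.~\ref{fig:bifurcation}); at such points the naive ``second-smallest real part'' labeling is only Lipschitz, not differentiable, and Eq.~\eqref{eq:lam2_prime44} should be read as the one-sided derivative along the chosen analytic branch. Under the implicit genericity assumption that $\overline{\lambda}_2(\pm 1)$ is isolated in real part within the spectrum, the derivation above is complete.
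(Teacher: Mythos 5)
Your proof is correct and follows essentially the same route as the paper's: a first-order perturbation of the eigenproblem about the endpoints $\delta=\pm 1$, where $\overline{\mathbf{L}}(\pm 1)$ collapses to a single scaled intralayer Laplacian; you simply invoke the standard identity $\lambda' = \mathbf{u}^{*}(d\mathbf{M}/d\delta)\mathbf{v}/(\mathbf{u}^{*}\mathbf{v})$ where the paper re-derives it by matching zeroth- and first-order terms in a Taylor expansion of the eigenvalue equation, and your add-and-subtract step reproduces the paper's algebraic regrouping exactly. Your closing remark about identifying the ``second-smallest real part'' labeling with the analytic branch emanating from the endpoint is a genuine subtlety that the paper's proof leaves implicit, and flagging it strengthens rather than changes the argument.
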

\begin{proof} 
We first consider $\delta \to -1 $ and note  the identity
\begin{align}
     \frac{1 + \delta}{2}\,& \chi\mathbf{L}^{(1)} + \frac{1 - \delta}{2}\,  (1-\chi)\mathbf{L}^{(2)} \nonumber\\
    = & (1-\chi)\mathbf{L}^{(2)}  + (1 + \delta)\, \frac{\chi\mathbf{L}^{(1)}-(1-\chi)\mathbf{L}^{(2)} }{2}  \nonumber\\
    = &(1-\chi)\mathbf{L}^{(2)}  + (1 + \delta)\, \chi \frac{ \mathbf{L}^{(1)} + \mathbf{L}^{(2)} }{2}  -  (1 + \delta)\, \frac{\mathbf{L}^{(2)} }{2}  \, .
\end{align}
%
We Taylor expand the eigenvalue $\overline{\lambda}_2(\delta)$ and its associated left and right eigenvectors $\overline{{\bf u}}(\delta)$ and $\overline{{\bf v}}(\delta)$, respectively, to find
\begin{subequations}
\begin{align} \label{eq:perturbation_minus1}
    \overline{\lambda}_2(\delta) 
    &= \overline{\lambda}_2(-1) 
    + (1+\delta) \overline{\lambda}^{\prime}_2(  -1) 
    + \mathcal{O}( (1+\delta)^2),  \\
    \overline{{\bf u}}(\delta) &= \overline{ {\bf u}} (-1) + (1+\delta){\overline{ {\bf u}}'}(-1) + \mathcal{O}((1+\delta)^2),  \\
    \overline{ {\bf v}}(\delta) &= \overline{ {\bf v}} (-1) + (1+\delta){\overline{ {\bf v}}'}(-1) + \mathcal{O}((1+\delta)^2). 
\end{align}
\end{subequations}
We substitute the first-order approximations    into the eigenvalue equation $ \overline{\mathbf{L}}(\delta)\overline{ {\bf v}}(\delta)  =  \overline{\lambda}_2(\delta) \overline{ {\bf v}}(\delta) $ to obtain
    \begin{align}
        \Big[ \frac{1 + \delta}{2} & \chi\mathbf{L}^{(1)} + \frac{1 - \delta}{2}  (1-\chi)\mathbf{L}^{(2)}\Big] 
        \Big[ \overline{ {\bf v}} (-1) + (1+\delta){\overline{ {\bf v}}'}(-1)\Big]  \nonumber\\
        =& \Big[ \overline{\lambda}_2(-1) (1+\delta) \overline{\lambda}^{\prime}_2(  -1)  \Big]
         \Big[ \overline{ {\bf v}} (-1) + (1+\delta){\overline{ {\bf v}}'}(-1)\Big] .
    \end{align}

The zeroth-order and first-order terms must both be consistent, which gives rise to two equations. The zeroth-order terms yield an eigenvalue equation
\begin{align}
   (1-\chi)\mathbf{L}^{(2)}  \overline{ {\bf v}} (-1)    =  \overline{\lambda}_2(-1) 
  \overline{ {\bf v}} (-1)   , 
\end{align}
which implies that  $\overline{ {\bf v}} (-1) = {\bf v}^{(2)}$ and $\overline{\lambda}_2(-1)  = (1-\chi) \lambda_2^{(2)}$.
The first-order terms yield a linear equation
    \begin{align}
        (1-\chi)\mathbf{L}^{(2)} &{\overline{ {\bf v}}'}(-1) +    \frac{\chi\mathbf{L}^{(1)}-(1-\chi)\mathbf{L}^{(2)} }{2} {\bf v}^{(2)} \nonumber\\
        = & (1-\chi) \lambda_2^{(2)} {\overline{ {\bf v}}'}(-1) 
        +   \overline{\lambda}^{\prime}_2(  -1)    {\bf v}^{(2)} .
    \end{align}
We left multiply both sides of this equation by $  {\mathbf{u}^{(2)}}^*$ to obtain
\begin{align}
    \chi\, 
    \text{Re} \left( \frac{{\mathbf{u}^{(2)}}^* \left( \mathbf{L}^{(1)} 
    + \mathbf{L}^{(2)} \right) {\bf v}^{(2)}}{2{\mathbf{u}^{(2)}}^*{\bf v}^{(2)} } \right) 
    - \text{Re} \left(  \frac{\lambda_2^{(2)} }{2}  \right)
    =  \overline{\lambda}^{\prime}_2(  -1) .
\end{align}
This complete the analysis for $\delta \to -1$. We repeat this procedure for  $\delta \to 1$ to complete the proof.
\end{proof}

\begin{theorem}
    Consider the following two roots for the linear equations defined in \eqref{eq:lam2_prime44},
    \begin{subequations}
    \begin{align}\label{eq:critical_chi44}
        \hat{\chi}(1) &=  \text{Re} \left( \frac{ {\mathbf{u}^{(1)}}^* \mathbf{L}^{(2)} \mathbf{v}^{(1)}} 
        { {\mathbf{u}^{(1)}}^* (\mathbf{L}^{(1)} + \mathbf{L}^{(2)}) \mathbf{v}^{(1)}} \right) ,\\
        \hat{\chi}(-1) &=  \text{Re} \left( \frac{ {\mathbf{u}^{(2)}}^* \mathbf{L}^{(2)} \mathbf{v}^{(2)}} 
        { {\mathbf{u}^{(2)}}^* (\mathbf{L}^{(1)} + \mathbf{L}^{(2)}) \mathbf{v}^{(2)}} \right),
    \end{align}
    \end{subequations}
    and also define
    \begin{subequations}
    \begin{align}
        s_{1}&= \text{Re} \left( \frac{ {\mathbf{u}^{(1)}}^* \left(\mathbf{L}^{(1)} + \mathbf{L}^{(2)} \right) \mathbf{v}^{(1)} }{ {\mathbf{u}^{(1)}}^* \mathbf{v}^{(1)} } \right), \\
        s_{2} &= \text{Re} \left( \frac{ {\mathbf{u}^{(2)}}^* \left(\mathbf{L}^{(1)} + \mathbf{L}^{(2)} \right) \mathbf{v}^{(2)} }{ {\mathbf{u}^{(2)}}^* \mathbf{v}^{(2)} } \right).
    \end{align}
    \end{subequations}
    Under the assumption that $s_{1}$ and $s_{2}$ are both positive, and $\hat{\chi}(-1), \hat{\chi}(1) \in [0,1]$ with $\hat{\chi}(-1) < \hat{\chi}(1)$, then the convergence rate $\text{Re} \left( \overline{\lambda}_2(\delta) \right)$ for a 2-layer multiplex network is guaranteed to have a cooperative maximum at some value $\delta\in(-1,1)$ if  $\chi \in \Big( \hat{\chi}(-1),  \hat{\chi}(1) \Big) $.  
\end{theorem}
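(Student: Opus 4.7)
The plan is to derive sign information for $\overline{\lambda}_2'(\pm 1)$ under the stated hypotheses and then combine it with continuity of $f(\delta):=\text{Re}\left(\overline{\lambda}_2(\delta)\right)$ on the compact interval $[-1,1]$ to force its maximum into the interior.

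First, I would invoke Theorem~\ref{thm:t4}, which expresses $\overline{\lambda}_2'(1)$ and $\overline{\lambda}_2'(-1)$ as affine functions of $\chi$ with slopes $s_1/2$ and $s_2/2$, respectively. Solving each for its unique zero in $\chi$ immediately recovers the formulas of Eq.~\eqref{eq:critical_chi44} for $\hat{\chi}(1)$ and $\hat{\chi}(-1)$; the eigenvalue identity $\mathbf{u}^{(2)*}\mathbf{L}^{(2)}\mathbf{v}^{(2)} = \lambda_2^{(2)}\,\mathbf{u}^{(2)*}\mathbf{v}^{(2)}$ is what rewrites the second root in the compact form stated. By the hypothesis $s_1,s_2>0$, both affine maps $\chi\mapsto \overline{\lambda}_2'(\pm 1)$ are strictly increasing, so $\overline{\lambda}_2'(1)<0$ iff $\chi<\hat{\chi}(1)$, and $\overline{\lambda}_2'(-1)>0$ iff $\chi>\hat{\chi}(-1)$. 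For any $\chi\in(\hat{\chi}(-1),\hat{\chi}(1))$ --- nonempty by the ordering hypothesis --- both sign conditions hold simultaneously.

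Next, I would exploit the continuity of $f$ on $[-1,1]$. Since $\overline{\mathbf{L}}(\delta)$ in Eq.~\eqref{eq:L_bar_chi} depends affinely on $\delta$ and the full spectrum depends continuously on matrix entries (as an unordered multiset, by standard matrix perturbation theory), the real part of the eigenvalue with second-smallest real part is a continuous function of $\delta$. By the Weierstrass extreme value theorem, $f$ attains its maximum on $[-1,1]$. Because $f'(-1)>0$, there exists $\delta_0\in(-1,1)$ with $f(\delta_0)>f(-1)$, so the maximum cannot occur at $\delta=-1$; symmetrically, $f'(1)<0$ rules out $\delta=1$. Hence the maximum is attained at some $\hat{\delta}\in(-1,1)$, which is, by definition, a cooperative optimum.

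The main obstacle is the continuity and one-sided differentiability of $f$ across the whole interval. While $\overline{\mathbf{L}}(\delta)$ is itself a smooth (linear) family, the eigenvalue labelled ``second-smallest real part'' can be non-smooth at collision points, as the spectral bifurcation in Appendix~\ref{sec:nonr} illustrates for $\mathbb{L}(\omega,\delta)$; continuity of $f$ as a function of $\delta$ still holds via perturbation theory for unordered spectra, which is all the Weierstrass step requires, while differentiability is only needed at the endpoints $\delta=\pm1$ and is secured by the simplicity assumption on $\lambda_2^{(t)}$ carried over from Theorem~\ref{thm:t4}. Once these technicalities are absorbed, the extreme-value argument cleanly yields a cooperative maximum at some $\hat{\delta}\in(-1,1)$.
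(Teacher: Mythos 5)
Your proposal is correct and follows essentially the same route as the paper: it uses the affine dependence of $\overline{\lambda}_2'(\pm 1)$ on $\chi$ from Theorem~\ref{thm:t4} to translate $\chi\in(\hat{\chi}(-1),\hat{\chi}(1))$ into the endpoint sign conditions $\overline{\lambda}_2'(-1)>0$ and $\overline{\lambda}_2'(1)<0$, and then concludes an interior maximum, exactly as the paper does via its (loosely named) ``Rolle's Theorem'' step. Your version is in fact slightly more careful than the paper's, since you explicitly supply the continuity of $\delta\mapsto\text{Re}\left(\overline{\lambda}_2(\delta)\right)$ and the extreme-value argument needed to rule out boundary maxima.
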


\begin{proof}
    Our proof relies on Rolle's Theorem \cite{sahoo1998mean} for a continuous function $f(\delta)$ on some domain $\delta\in[a,b]$: if $f'(a)>0$ and $f'(b)<0$,  then there exists at least one value of  $\hat{\delta}$ at which the function   $f(\delta) $ obtains its maximum. In our case, $f(\delta)=\text{Re}\left( \overline{\lambda}_2(\delta) \right)$ and $[a,b] = [-1,1]$. 
    Thus, the   maximum is guaranteed to exist  provided that  $\overline{\lambda}_2^{\prime}(-1)>0$ and $\overline{\lambda}_2^{\prime}(1)<0$. 
    
    This criterion can be generally checked for any $\delta$ by simply evaluating $\overline{\lambda}_2^{\prime}(\delta)$ at $\delta =\pm 1$. 
    Moreover, we can  apply these bounds on the right-hand-sides of  \eqref{eq:lam2_prime}   and solve for  $\chi$ to obtain   intervals within which an optimum is guaranteed.
    Depending on the different signs that $s_{1}$ and $s_{2}$ can take and the   values of $\hat{\chi}(-1)$ and $\hat{\chi}(1)$, in principle, different types of intervals are possible. 
In the simplest case, 
%
both $s_{1}$ and $s_{2}$ are positive and the Rolle's Theorem inequalities are equivalent to the inequalities $\chi > \hat{\chi}(-1)$ and $\chi < \hat{\chi}(1)$. Also, assuming that $0 \le \hat{\chi}(-1) < \hat{\chi}(1) \le 1$, then we can conclude that  a cooperative maximum exists when $\chi \in \Big( \hat{\chi}(-1),  \hat{\chi}(1) \Big) $.  
\end{proof}

\section{Extended Study of Optima for   Random Multiplex Networks}\label{sec:appendix_large}

\begin{figure*}[t]
    \centering 
    \includegraphics[width=.99\linewidth]{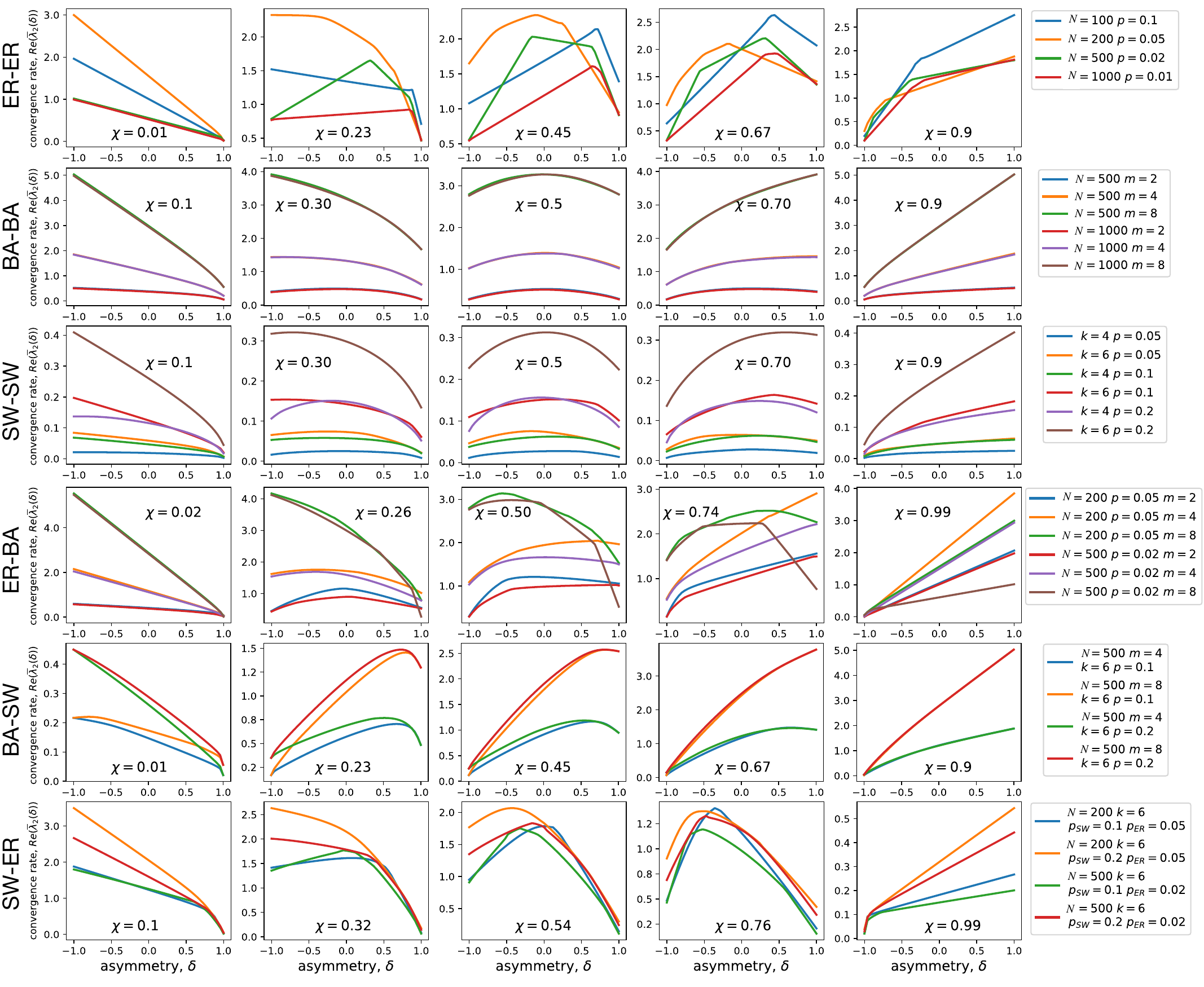}
    \caption{
    {\bf Layers' relative timescales influence whether optima are cooperative vs. non-cooperative for random multiplex networks.}
    We plot  our theoretical prediction given by \eqref{eq:lim_lam2}--\eqref{eq:L_bar_2})  for  the convergence rate 
    versus $\delta$.
    The six rows show $\text{Re}( \overline{\lambda}_2(\delta) )$  for the six random multiplex networks introduced in Sec.~\ref{sec:large}, and the different   curves reflect different parameter choices for the random-graph layers   (see legend).
     The different columns correspond to different choices for the rate-scaling parameter $\chi\in(0,1)$, which  tunes whether layer 1 is much faster ($\chi\approx 1$) or layer 2 is much faster ($\chi\approx 0$). 
     In all cases, we find that the optima are cooperative for intermediate values of $\chi$ (e.g., third column) and non-cooperative when $\chi$ is either too small or large (e.g., left-most and right-most columns). 
    }
    \label{fig:strong_RG}
\end{figure*}

Here, we further study the effects of layers' relative timescales on optima for the family of  random multiplex networks described in Sec.~\ref{sec:large} in which each layer is created using one of three models: the  Erd\"os-R\'enyi (ER) model \cite{erdds1959random},   Barab\'asi-Albert (BA) model \cite{barabasi1999emergence},  or the Watts-Strogatz small-world (SW) model \cite{watts1998collective}. The main extension here is that we now consider many different choices for these generative models.

In Fig.~\ref{fig:strong_RG}, we plot our theoretically predicted convergence rate  $\text{Re}\left( \overline{\lambda}_2(\delta) \right)$ given by \eqref{eq:lim_lam2}--\eqref{eq:L_bar_2}  versus $\delta$. Different columns  reflect different choices for the rate-scaling parameter   $\chi$, and different rows correspond to different generative processes for the random multiplex networks. For example, ``ER-BA''  indicates that layer 1 is created as an ER random graph, while layer 2 is created by the Barab\'asi-Albert (BA)  preferential-attachment model.
%
In each panel, the colored curves depict various parameter choices for the random-graph models (see legends).
By comparing across the columns, observe that their optima are cooperative for intermediate values of $\chi$ (e.g., the third column) and non-cooperative when $\chi$ is either too small or large (e.g., the left-most and right-most columns).

By focusing on the second and fourth columns in Fig.~\ref{fig:strong_RG}, one can observe that some curves yield a cooperative optimum while others do not, depending the parameter choices for the generative models. This allows us to study how network parameters affect $\text{Re}(\lambda_2)$ and whether the optimum is cooperative vs. non-cooperative.
For example, consider ER-BA model in the fourth row, for which we study six parameter-choice combinations. For the first three parameter choices (blue, orange and green curves), we fix the number $N=200$ of nodes and the probability $p=0.05$ for edge creation, and vary  the constant  $m$ used for the BA model. For the last three parameter choices, we consider the same three values of $m$ but decrease $p$ to $p=0.02$. Our first observation is these changes in $m$ appear to have a greater effect than the change to  $p$. Also, observe in   the second column (i.e., $\chi=0.26$) that increasing $m$ can change the optimum from non-cooperative to cooperative for both choices of $p$. Interesting, the opposite can be observed in the   fourth column (i.e., $\chi=0.74$); increasing $m$ changes the optimum from non-cooperative to cooperative. Given this complicated response, we leave open to future research   further investigations into the diverse effects of network parameter choices on $\text{Re}(\lambda_2)$ for random multiplex networks. 



\end{document}